\DeclareMathAlphabet{\bm}{OML}{cmm}{b}{it}
\newtheorem{theorem}{Theorem}
\newtheorem{lemma}{Lemma}
\newtheorem{definition}{Definition}
\newtheorem{corollary}{Corollary}
\newtheorem{proposition}{Proposition}
\newcommand{\qed}{\hfill \IEEEQED}
\newcommand{\markov}{ - \!\!\circ\!\! - }
\newcommand{\bol}[1]{\mathbf{#1}}
\newcommand{\rom}[1]{\mathrm{#1}}
\newcommand{\san}[1]{\mathsf{#1}}
\newcommand{\Pe}{\rom{P}_{\rom{e}}}
\newcommand{\Pc}{\rom{P}_{\rom{c}}}
\begin{document}

\title{Second-Order Region for Gray-Wyner Network}

\author{Shun Watanabe
\thanks{S.~Watanabe is with the Department of Computer and Information Sciences, Tokyo University of Agriculture and Technology, Japan, E-mail:shunwata@cc.tuat.ac.jp.}
}


\maketitle
\begin{abstract}
The coding problem over the Gray-Wyner network is studied from the second-order coding rates perspective. 
A tilted information density for this network is introduced in the spirit of Kostina-Verd\'u, and, under a certain regularity condition, 
the second-order region is characterized in terms of the variance of this tilted information density and the tangent vector of the first-order region. 
The second-order region is proved by the type method: the achievability part is proved by the type-covering argument,
and the converse part is proved by a refinement of the perturbation approach that was used by Gu-Effros to show the strong converse of the
Gray-Wyner network. This is the first instance that the second-order region is characterized for a multi-terminal problem where
the characterization of the first-order region involves an auxiliary random variable.  
\end{abstract}

\section{Introduction}

We study the coding problem over the Gray-Wyner network \cite{GraWyn:74} from the second-order
coding rates perspective. The study of the second-order coding rates has attracted significant interest 
in recent years since it gives a good approximation for the finite blocklength performance of certain coding systems \cite{hayashi:09, polyanskiy:10}. 
The second-order coding rates for point-to-point systems are quite well-understood 
\cite{Strassen:62, hayashi:09, polyanskiy:10, hayashi:08, KosVer:15, nomura:13, kostina:12, ingber:11,KosVer:13, WanIngKoc:11, KosVer:14, WatHay14}.
On the other hand, the extension of the second-order analysis to multi-terminal problems is rather immature; some problems 
are solved completely \cite{tan:12, NomHan:14, tomamichel:13, LeTanMot:15, ScaTan:13, HayTyaWat14, TyaVisWat:15, TyaShaVisWat:15}, 
but only achievability bounds are known for other problems \cite{tan:12, MolLan:12, HuaMou:12, ScaMarFab:15, watanabe:13e, YasAreGoh:13, Sca:13}.
See \cite{Tan:book} for further review of existing results on the second-order analysis. 


The Gray-Wyner network is described in Fig.~\ref{Fig:GW-network}.
The network consists of one encoder and two decoders. The encoder and both the decoders 
are connected by the common channel, and each decoder is also connected to the encoder by
its own private channel. Then, the goal for each decoder is to almost losslessly reproduce one part
of correlated sources, and we are interested in the optimal trade-off among the rates of the three channels. 
The information theoretic chanracterization of achievable rate triplets was derived in \cite{GraWyn:74}, and, as is typical for multi-terminal problems (cf.~\cite{elgamal-kim-book}),
it involves an auxiliary random variable, which makes the second-order analysis of this problem non-trivial. 


\subsection{Contributions}

We characterize the second-order region of the Gray-Wyner network under a certain regularity condition.\footnote{Because of the 
regularity condition, our result cannot be applied to singular points on the boundary of the first-order region, i.e., the 
boundary points where the first-order region cannot be differentiated.} 
For that purpose, we introduce a tilted information density for this network in the spirit of Kostina-Verd\'u \cite{kostina:12}.
Then, the second-order region is characterized in terms of the variance of this tilted information density and the tangent vector of the first-order region. 
Since the first-order region of the Gray-Wyner  network is characterized by an auxiliary random variable, 
the tilted information density is defined by using that auxiliary random variable.
In general, there is no guarantee that an optimal test channel is unique, and 
more than one optimal test channel may exist.
However, we show that the tilted information density is uniquely defined
irrespective of the choice of optimal test channels. 
Also, we show some other properties of the tilted information density. 

In \cite{GraWyn:74}, the plane where the sum of the three rates coincide with the joint entropy of the correlated sources was called the
Pangloss plane, and it gained a special attention since there is no sum-rate loss compared to cooperative decoding schemes on this plane. 
When the first-order rates are on the Pangloss plane, as an illustration of our main result, we show a simple expression of the second-order region. 
Interestingly, the sum constraint of the second-order rates coincide with that can be achieved by cooperative decoding schemes; 
this means that there is no sum-rate loss compared to cooperative decoding schemes even up to the second-order.  

In the proof of the second-order region, we use the type method.
The achievability part is proved by an application of the type covering argument (cf.~\cite{YuSpe:93} and \cite[Chapter 9]{csiszar-korner:11}).
For the converse part, we refine the perturbation approach that was used by Gu-Effros \cite{GuEff:09, GuEff:11} to show the strong converse of the
Gray-Wyner network. By these argument, we first derive an upper bound and a lower bound on the error probability in terms of 
a probability of a certain function of the joint type. Then, we approximate that probability by using the central limit theorem. 

When we use the type method for the second-order analysis, say the rate-distortion problem, we take a derivative of the rate-distortion function
with respect to the source distribution, and the second-order rate is characterized in terms of the variance of that derivative (cf.~\cite{ingber:11}). 
Then, we can show that that characterization coincides with the variance of the $d$-tilted information introduced in \cite{kostina:12}.
In this paper, we consider a slightly different argument. When we take a derivative of a certain function of a distribution, we have to
extend the domain of that function to the outside of the probability simplex (cf.~\cite[Appendix A]{NoIngWei:15}). In order to circumvent 
such an extension, we consider a different parameterization of the probability simplex, which is often used in information geometry \cite{amari-nagaoka}.
Then, we take a derivative of the function with respect to that parameter. Also, instead of introducing the variance of the derivative, we directly 
characterize the second-order region in terms of the variance of the tilted information density.

\subsection{Paper Organization}

The rest of the paper is organised as follows:
In Section \ref{section:problem}, we introduce our notation, and recall the problem formulation of
the Gray-Wyner network. In Section \ref{section:tilted}, we introduce the tilted information density 
for the Gray-Wyner network, and investigate its properties. 
Then, in Section \ref{section:coding-theorem}, we show our second-order coding theorem and its proof.
In Section \ref{section:pangloss}, we further investigate the Pangloss plane. 
We conclude the paper with some discussions in Section \ref{section:discussion}.


\begin{figure}[tb]
\centering{
\begin{minipage}{.5\textwidth}
\includegraphics[width=\textwidth]{./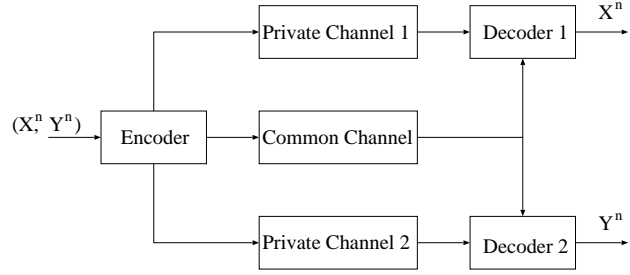}
\caption{A description of the Gray-Wyner network.}
\label{Fig:GW-network}
\end{minipage}
}
\end{figure}

\section{Problem Formulation} \label{section:problem}

In this section, we introduce our notations and recall the Gray-Wyner network \cite{GraWyn:74}.

\subsection{Notations} \label{subsection:notation}

Random variables (e.g.~$X$) and their realizations (e.g.~$x$) are in capital and lower 
case, respectively. All random variables take values in some finite alphabets which are
denoted in calligraphic font (e.g.~${\cal X}$). The cardinality of ${\cal X}$ is denoted as $|{\cal X}|$.
Let the random vector $X^n = (X_1,\ldots,X_n)$ and similarly for a realization 
$\bm{x} = (x_1,\ldots,x_n)$. For information theoretic quantities, 
we follows the same notations as \cite{csiszar-korner:11}; e.g.~the 
entropy and the mutual information are denoted by $H(X)$ and $I(X \wedge Y)$, respectively.
Also, the expectation and the variance are denoted by $\san{E}[\cdot]$ and $\san{V}[\cdot]$ respectively.
$\san{Q}(t) = \int_t^\infty \frac{1}{\sqrt{2\pi}} e^{- \frac{u^2}{2}} du$ is the upper tail probability of the standard normal
distribution; its inverse is denoted by $\san{Q}^{-1}(\varepsilon)$ for $0 < \varepsilon < 1$. 

The set of all distribution on ${\cal X}$ is denoted by ${\cal P}({\cal X})$. 
The set of all channels from
${\cal X}$ to ${\cal Y}$ is denoted by ${\cal P}({\cal Y}|{\cal X})$.
We will also use the method of types \cite{csiszar-korner:11}. 
For a given sequence $\bm{x}$, its type is denoted by $\san{t}_{\bm{x}}$.
The set of all types on ${\cal X}$ is denoted by ${\cal P}_n({\cal X})$, and the 
set of all conditional types is denoted by ${\cal P}_n({\cal Y}|{\cal X})$. For a given type $P_{\bar{X}} \in {\cal P}_n({\cal X})$,
the set of all sequences with type $P_{\bar{X}}$ is denoted by ${\cal T}_{\bar{X}}^n$.
For a given joint type $P_{\bar{X}\bar{Y}}$ and a sequence $\bm{x} \in {\cal T}_{\bar{X}}^n$, the 
set of all sequences whose joint type with $\bm{x}$ is $P_{\bar{X}\bar{Y}}$ is denoted by ${\cal T}_{\bar{Y}|\bar{X}}^n(\bm{x})$.
For type $P_{\bar{X}}$ and joint type $P_{\bar{X}\bar{Y}}$, we use notations $H(\bar{X})$ and $I(\bar{X} \wedge \bar{Y})$,
where the random variables are distributed according to those type and joint type.

For a given distribution $P_X$, its support is denoted by $\mathtt{supp}(P_X)$. In latter sections, we will
differentiate a certain function of distributions around a given joint distribution $P_{XY}$, which may not have full support. 
For that purpose, it is convenient to introduce a
parametrization for distribution $P$ that has the same support as $P_{XY}$.\footnote{In the literature \cite{ingber:11}
(see also \cite{NoIngWei:15}),
the probability simplex is embedded into the Euclidian space, and the parameterization
on that Euclidian space is used. However in this paper, we regard the probability simplex as 
a manifold (cf.~\cite{amari-nagaoka}), and we consider a parameterization
that is different from the literature so that we do not have to extend the domain of a certain function to outside the probability simplex.} 
Let $m = \mathtt{supp}(P_{XY})$; without loss of generality, we assign 
$1$ through $m$ to elements in $\mathtt{supp}(P_{XY})$. Then, parameter $\theta( P) \in \mathbb{R}^{m-1}$ is
defined as $\theta_i = P(i)$ for $i = 1,\ldots,m-1$; apparently it holds $P(m) = 1- \sum_{i=1}^{m-1} \theta_i$. 
The distribution corresponding to parameter $\theta$ 
is denoted by $P_\theta$. 

\subsection{Gray-Wyner Network}

In this section, we recall the lossless source coding problem over the Gray-Wyner network (see Fig.~\ref{Fig:GW-network}).
Let us consider a correlated source $(X,Y)$ taking values in ${\cal X}\times {\cal Y}$ and 
having joint distribution $P_{XY}$. We consider a block coding of length $n$.
A coding system consists of three encoders
\begin{align}
& \varphi^{(n)}_{0}: {\cal X}^n \times {\cal Y}^n \to {\cal M}_0^{(n)}, \\
& \varphi^{(n)}_1: {\cal X}^n \times {\cal Y}^n \to {\cal M}^{(n)}_1, \\
& \varphi^{(n)}_2: {\cal X}^n \times {\cal Y}^n \to {\cal M}^{(n)}_2, 
\end{align}
and two decoders
\begin{align}
& \psi^{(n)}_1: {\cal M}^{(n)}_0 \times {\cal M}^{(n)}_1 \to {\cal X}^n, \\
& \psi^{(n)}_2: {\cal M}^{(n)}_0 \times {\cal M}^{(n)}_2 \to {\cal Y}^n.
\end{align}
The message encoded by $\varphi^{(n)}_{0}$ is sent over the common channel, and received by
both the decoders; the message encoded by $\varphi^{(n)}_i$ is sent over the private channel
to $i$th decoder, where $i=1,2$. The first decoder is required to reproduce $X^n$ almost losslessly,
while the second decoder is required to reproduce $Y^n$ almost losslessly. 
In the following, we omit the blocklength $n$ when it is obvious from the context.
For $(X^n, Y^n) \sim P$, the error probability of code $\Phi_n = (\varphi_{0}, \varphi_{1}, \varphi_{2}, \psi_1, \psi_2)$ is defined as 
\begin{align}
\Pe(\Phi_n|P) := \Pr\bigg( (\psi_1(\varphi_{0}(X^n,Y^n), \varphi_{1}(X^n,Y^n)), 
 \psi_2(\varphi_{0}(X^n,Y^n), \varphi_{2}(X^n,Y^n))) \neq (X^n, Y^n) \bigg).
\end{align}
Then, the correct probability of the code is defined as 
\begin{align}
\Pc(\Phi_n|P) := 1  - \Pe(\Phi_n|P).
\end{align}
In the following, we are particularly interested in the case where $P$ is a product distribution $P_{XY}^n$,
i.e., $(X^n,Y^n)$ is an i.i.d. sequence.

\begin{definition}[First-Order Region]
The rate triplet $(r_0,r_1,r_2) \in \mathbb{R}_+^3$ is defined to be achievable if there exists a 
sequence of code $\{ \Phi_n \}_{n=1}^\infty$ such that 
\begin{align}
\limsup_{n\to \infty} \frac{1}{n} \log |{\cal M}_0^{(n)}| &\le r_0, \\
\limsup_{n\to \infty} \frac{1}{n} \log |{\cal M}_1^{(n)}| &\le r_1, \\
\limsup_{n\to \infty} \frac{1}{n} \log |{\cal M}_2^{(n)}| &\le r_2,
\end{align}
and
\begin{align}
\lim_{n \to \infty} \Pe(\Phi_n|P_{XY}^n) = 0.
\end{align}
Then, the achievable region ${\cal R}_{\mathtt{GW}}(P_{XY})$ is defined as the 
set of all achievable rate triplets. 
\end{definition}

The first-order region ${\cal R}_{\mathtt{GW}}(P_{XY})$ is characterized in \cite{GraWyn:74}.
Let ${\cal R}_{\mathtt{GW}}^*(P_{XY})$ be the set of all rate triplets $(r_0,r_1,r_2)$ such that there exists 
a test channel $P_{W|XY}$ with $|{\cal W}| \le |{\cal X}||{\cal Y}| + 2$ such that
\begin{align}
r_0 &\ge I(W \wedge X,Y), \\
r_1 &\ge H(X|W), \\
r_2 &\ge H(Y|W).
\end{align}

\begin{proposition}[\cite{GraWyn:74}]
It holds that\footnote{In fact, the cardinality bound was not shown in \cite{GraWyn:74}, but it can be
proved by the support lemma \cite{csiszar-korner:11} (see also \cite{KamAna:10}).} 
\begin{align}
{\cal R}_{\mathtt{GW}}(P_{XY}) = {\cal R}^*_{\mathtt{GW}}(P_{XY}).
\end{align}
\end{proposition}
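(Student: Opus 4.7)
The plan is to establish the two inclusions ${\cal R}^*_{\mathtt{GW}}(P_{XY}) \subseteq {\cal R}_{\mathtt{GW}}(P_{XY})$ (achievability) and ${\cal R}_{\mathtt{GW}}(P_{XY}) \subseteq {\cal R}^*_{\mathtt{GW}}(P_{XY})$ (converse) separately, and then invoke the support lemma to reduce the alphabet of the auxiliary $W$ to $|{\cal X}||{\cal Y}|+2$.

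For achievability, I would fix any test channel $P_{W|XY}$ realising a rate triplet in ${\cal R}^*_{\mathtt{GW}}(P_{XY})$ with an arbitrarily small slack $\delta>0$, and build a joint-typicality code. First, generate a common-channel codebook ${\cal C}=\{\bm{W}(\ell)\}_{\ell=1}^L$ with $L=\lceil 2^{n(I(W\wedge X,Y)+2\delta)}\rceil$ i.i.d.\ draws from $P_W^n$. The common encoder $\varphi_0$ outputs the first index $\ell$ for which $(\bm{W}(\ell),X^n,Y^n)$ is jointly typical w.r.t.\ $P_{WXY}$; the covering lemma guarantees the existence of such $\ell$ with probability $\to 1$. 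Conditional on $\bm{W}(\ell)$, the private encoders $\varphi_1,\varphi_2$ then describe $X^n$ and $Y^n$ through conditional-typicality codebooks of sizes $2^{n(H(X|W)+\delta)}$ and $2^{n(H(Y|W)+\delta)}$ respectively, declaring an error when the source sequence is atypical within its conditional class. A union bound over the covering failure and the two conditional-typicality failures shows $\Pe(\Phi_n|P_{XY}^n)\to 0$, and letting $\delta\downarrow 0$ sweeps out all of ${\cal R}^*_{\mathtt{GW}}(P_{XY})$.

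For the converse, given any sequence $\Phi_n$ with $\Pe(\Phi_n|P_{XY}^n)\to 0$, write $M_j=\varphi_j(X^n,Y^n)$ and apply Fano's inequality to the two decoding constraints to obtain $H(X^n|M_0,M_1)\le n\varepsilon_n$ and $H(Y^n|M_0,M_2)\le n\varepsilon_n$ with $\varepsilon_n\to 0$. Introduce the per-letter auxiliary $W_i:=(M_0,X^{i-1},Y^{i-1})$. Using the i.i.d.\ structure of the source, standard chain-rule manipulations give
\begin{align*}
\log|{\cal M}_0^{(n)}| &\ge H(M_0)\ge I(M_0;X^n,Y^n)=\sum_{i=1}^n I(W_i;X_i,Y_i), \\
\log|{\cal M}_1^{(n)}| &\ge H(M_1|M_0)\ge H(X^n|M_0)-n\varepsilon_n\ge \sum_{i=1}^n H(X_i|W_i)-n\varepsilon_n,
\end{align*}
and symmetrically for $\log|{\cal M}_2^{(n)}|$. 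A uniform time-sharing variable $T$ independent of everything, followed by setting $W:=(W_T,T)$, $X:=X_T$, $Y:=Y_T$, single-letterises all three bounds and produces a test channel $P_{W|XY}$ with marginal $P_{XY}$; sending $n\to\infty$ and using the closedness of ${\cal R}^*_{\mathtt{GW}}(P_{XY})$ closes the argument.

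The cardinality bound $|{\cal W}|\le|{\cal X}||{\cal Y}|+2$ follows from the support lemma applied to the $|{\cal X}||{\cal Y}|-1$ linear functionals fixing the joint marginal $P_{XY}$ together with the three continuous functionals $w\mapsto H(X,Y|W=w)$, $w\mapsto H(X|W=w)$, $w\mapsto H(Y|W=w)$; Fenchel--Eggleston--Carath\'eodory then produces a support of the claimed size preserving all three rate bounds simultaneously. The main delicate step will be the converse identification of $W_i=(M_0,X^{i-1},Y^{i-1})$: one must carefully check that the i.i.d.\ assumption makes $(X^{i-1},Y^{i-1})$ independent of $(X_i,Y_i)$ so that $I(M_0;X_i,Y_i|X^{i-1},Y^{i-1})=I(W_i;X_i,Y_i)$, and simultaneously that enlarging the conditioning to include the past in the private-rate bound is free of cost, because $H(X^n|M_0)=\sum_i H(X_i|M_0,X^{i-1})\ge \sum_i H(X_i|M_0,X^{i-1},Y^{i-1})$. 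Once this single-letterisation is in place the rest is bookkeeping.
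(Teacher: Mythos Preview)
The paper does not give its own proof of this proposition; it is stated as a result of Gray--Wyner \cite{GraWyn:74}, with the cardinality bound relegated to a footnote invoking the support lemma. Your proposal is a correct and entirely standard proof: the random covering for achievability, the identification $W_i=(M_0,X^{i-1},Y^{i-1})$ together with Fano's inequality and time-sharing for the converse, and the support-lemma count $|{\cal X}||{\cal Y}|-1+3$ for the cardinality bound are all sound. The delicate points you flag (that the i.i.d.\ assumption gives $I(X^{i-1},Y^{i-1}\wedge X_i,Y_i)=0$, and that enlarging the conditioning from $X^{i-1}$ to $(X^{i-1},Y^{i-1})$ only decreases the conditional entropy) are exactly the right checks.

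As a point of comparison, the paper's own second-order arguments use closely related but not identical machinery: the achievability in Section~\ref{subsection:achievability} is via a deterministic type-covering lemma (Lemma~\ref{lemma:code-construction}) rather than random i.i.d.\ codebook generation, and the converse in Lemma~\ref{lemma:fixed-type-converse} carries out essentially your single-letterization $W_i=(S_0,X^{i-1},Y^{i-1})$ but on a perturbed distribution supported on a single type class, where the source is no longer i.i.d.\ and the residual term $\sum_i H(X_i,Y_i)-H(X^n,Y^n)$ must be controlled separately. For the first-order statement your simpler i.i.d.\ argument suffices and is the natural route.
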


In this paper, we are interested in the second-order region. We follow the second-order formulation in \cite{NomHan:14}.
\begin{definition}[Second-Order Region]
For a boundary point $(r_0^*, r_1^*, r_2^*)$ of ${\cal R}_{\mathtt{GW}}(P_{XY})$ and $0 < \varepsilon < 1$, 
the rate triplet $(L_0,L_1,L_2) \in \mathbb{R}^3$ is defined to be $(\varepsilon,r_0^*, r_1^*, r_2^*)$-achievable if
there exists a sequence of code $\{\Phi_n \}_{n=1}^{\infty}$ such that
\begin{align}
\limsup_{n\to\infty} \frac{\log |{\cal M}_0^{(n)}| - n r_0^*}{\sqrt{n}} &\le L_0, \label{eq:definition-second-achievability-0} \\
\limsup_{n\to\infty} \frac{\log |{\cal M}_1^{(n)}| - n r_1^*}{\sqrt{n}} &\le L_1, \label{eq:definition-second-achievability-1} \\
\limsup_{n\to\infty} \frac{\log |{\cal M}_2^{(n)}| - n r_2^*}{\sqrt{n}} &\le L_2, \label{eq:definition-second-achievability-2}
\end{align}
and 
\begin{align}
\limsup_{n\to\infty} \Pe(\Phi_n|P_{XY}^n) \le \varepsilon.
\end{align}
Then, the $(\varepsilon, r_0^*, r_1^*, r_2^*)$-achievable region ${\cal L}_{\mathtt{GW}}(\varepsilon; r_0^*,r_1^*,r_2^*)$ is defined
as the set of all $(\varepsilon, r_0^*, r_1^*, r_2^*)$-achievable rate triplets.
\end{definition}

In contrast to first-order rates, second-order rates may be negative even though they are conventionally called ``rates".

\section{Tilted Information Density} \label{section:tilted}

In this section, we introduce the tilted information density for the Gray-Wyner network
in the spirit of \cite{kostina:12}. The tilted information density plays an important role to
characterize the second-order region ${\cal L}_{\mathtt{GW}}(\varepsilon; r_0^*,r_1^*,r_2^*)$ in the next section.

Given $r_1,r_2 > 0$, let 
\begin{align} \label{eq:optimal-r0-function}
R(r_1,r_2|P_{XY}) &:= \min\big\{ r_0 : (r_0,r_1,r_2) \in {\cal R}_{\mathtt{GW}}^*(P_{XY}) \big\} \\
&= \min\big\{ I(W \wedge X,Y): |{\cal W}| \le |{\cal X}||{\cal Y}| +2, r_1 \ge H(X|W), r_2 \ge H(Y|W) \big\}.
\end{align}
Since ${\cal R}_{\mathtt{GW}}^*(P_{XY})$ is a convex region, an optimal test channel satisfies 
the conditions $ r_1 \ge H(X|W)$ and $r_2 \ge H(Y|W)$ with equality unless $R(r_1,r_2|P_{XY})=0$.

Throughout the paper, we assume that ${\cal R}_{\mathtt{GW}}^*(P_{XY})$ is smooth at 
a boundary point $(r_0^*,r_1^*,r_2^*)$ of our interest,\footnote{The region ${\cal R}_{\mathtt{GW}}^*(P_{XY})$
has some singular points in general, and the following analysis does not apply for those singular points.} i.e.,
\begin{align} \label{eq:slope-of-plane}
\lambda_i^\star = \lambda_i^\star(P_{XY}) := - \frac{\partial}{\partial r_i} R(r_1,r_2|P_{XY}) \bigg|_{\bm{r}=\bm{r}^*}
\end{align}
is well defined for $i=1,2$, where $\bm{r}^* = (r_1^*,r_2^*)$.
Note that $\lambda_i^\star \ge 0$. In the following, we assume that they are strictly positive. In other words, we consider
a boundary point such that $r_0^* > 0$.

For given $P_{W|XY} \in {\cal P}({\cal W}|{\cal X}\times{\cal Y})$, 
$P_{\bar{W}} \in {\cal P}({\cal W})$, 
$P_{\hat{X}|\hat{W}} \in {\cal P}({\cal X}|{\cal W})$, and $P_{\hat{Y}|\hat{W}} \in {\cal P}({\cal Y}|{\cal W})$, 
we introduce the following function:
\begin{align}
& F(P_{W|XY},P_{\bar{W}}, P_{\hat{X}|\hat{W}},P_{\hat{Y}|\hat{W}}) \\
 &:= D(P_{W|XY} \| P_{\bar{W}} | P_{XY}) + \lambda_1^\star \san{E}\bigg[ \log \frac{1}{P_{\hat{X}|\hat{W}}(X|W)} - r_1^* \bigg]
  + \lambda_2^\star \san{E}\bigg[ \log \frac{1}{P_{\hat{X}|\hat{W}}(X|W)} - r_2^* \bigg] \\
&= I(W \wedge X,Y) + D(P_W \| P_{\bar{W}}) + \lambda_1^\star\big\{ H(X|W) + D(P_{X|W} \| P_{\hat{X}|\hat{W}}|P_W) - r_1^* \big\} \\
&~~~ + \lambda_2^\star\big\{ H(Y|W) + D(P_{Y|W} \| P_{\hat{Y}|\hat{W}}|P_W) - r_2^* \big\}.
\end{align}
From the second expression, we can find that the following holds: 
\begin{align}
R(r_1^*,r_2^*|P_{XY}) = \min_{P_{\bar{W}}} \min_{P_{\hat{X}|\hat{W}}} \min_{P_{\hat{Y}|\hat{W}}} \min_{P_{W|XY}} F(P_{W|XY},P_{\bar{W}}, P_{\hat{X}|\hat{W}},P_{\hat{Y}|\hat{W}}). 
\end{align}

For given $P_{\bar{W}}$, $P_{\hat{X}|\hat{W}}$, $P_{\hat{Y}|\hat{W}}$, $\lambda_1 >0$, and $\lambda_2 > 0$, let
\begin{align}
& \Lambda(x,y| P_{\bar{W}}, P_{\hat{X}|\hat{W}}, P_{\hat{Y}|\hat{W}}, \lambda_1,\lambda_2) \\
&:= - \log \san{E}\bigg[ \exp\bigg\{ \lambda_1 \bigg( r_1^* - \log \frac{1}{P_{\hat{X}|\hat{W}}(x|\bar{W})} \bigg)
 + \lambda_2 \bigg( r_2^* -  \log \frac{1}{P_{\hat{Y}|\hat{W}}(y|\bar{W})} \bigg) \bigg\}  \bigg],
\end{align}
where each term $\exp\{ \cdots \}$ in the expectation is understood as $0$ if either $P_{\hat{X}|\hat{W}}(x|w) = 0$ or
$P_{\hat{Y}|\hat{W}}(y|w) = 0$, and the expectation is taken with respect to $\bar{W} \sim P_{\bar{W}}$.

The following lemma gives a connection between the two functions $F(\cdots)$ and $\Lambda(\cdots)$. 
\begin{lemma} \label{lemma:gray-wyner-tentative-solution}
For any $P_{\bar{W}}$, $P_{\hat{X}|\hat{W}}$, and $P_{\hat{Y}|\hat{W}}$, we have
\begin{align}
\min_{P_{W|XY}} F(P_{W|XY},P_{\bar{W}}, P_{\hat{X}|\hat{W}},P_{\hat{Y}|\hat{W}}) 
 = \san{E}[\Lambda(X,Y| P_{\bar{W}}, P_{\hat{X}|\hat{W}}, P_{\hat{Y}|\hat{W}}, \lambda_1^\star,\lambda_2^\star)],
\end{align}
where the minimization is uniquely achieved by $P_{W|XY}$ such that 
\begin{align}
&P_{W|XY}(w|x,y) \\
&= P_{\bar{W}}(w) \exp\bigg\{ \Lambda(x,y| P_{\bar{W}}, P_{\hat{X}|\hat{W}}, P_{\hat{Y}|\hat{W}}, \lambda_1^\star,\lambda_2^\star) 
 + \lambda_1^\star \bigg( r_1^* -  \log \frac{1}{P_{\hat{X}|\hat{W}}(x|w)} \bigg)
 + \lambda_2^\star \bigg( r_2^* -  \log \frac{1}{P_{\hat{Y}|\hat{W}}(y|w)}  \bigg) \bigg\}
 \label{eq:GW-optimal-tentative}
\end{align}
and $P_{W|XY}(w|x,y) = 0$ whenever either $P_{\hat{X}|\hat{W}}(x|w) =0$ or $P_{\hat{Y}|\hat{W}}(y|w) = 0$.\footnote{The only
exceptional case is where either $P_{\hat{X}|\hat{W}}(x|w) =0$ or $P_{\hat{Y}|\hat{W}}(y|w) = 0$ for every
$w \in \mathtt{supp}(P_{\bar{W}})$. We will not invoke this lemma for such an exceptional case throughout the paper.}
\end{lemma}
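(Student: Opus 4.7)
The plan is to recognize that the inner minimization over $P_{W|XY}$ is just a KL-divergence minimization in disguise, whose solution is the Gibbs-type distribution in (\ref{eq:GW-optimal-tentative}), and whose minimum value is the log-normalizer $\mathsf{E}[\Lambda(\cdots)]$.

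First I would fix $P_{\bar{W}}$, $P_{\hat{X}|\hat{W}}$, $P_{\hat{Y}|\hat{W}}$ and rewrite $F$ by absorbing everything into a single expectation with respect to $P_{XY}P_{W|XY}$. Expanding the divergence and the two expectations, and collecting the $w$-dependent integrand for each fixed $(x,y)$, gives
\begin{align}
F(P_{W|XY},P_{\bar{W}},P_{\hat{X}|\hat{W}},P_{\hat{Y}|\hat{W}})
= \sum_{x,y} P_{XY}(x,y) \sum_{w} P_{W|XY}(w|x,y) \log \frac{P_{W|XY}(w|x,y)}{G(w|x,y)} - \lambda_1^\star r_1^* - \lambda_2^\star r_2^*,
\end{align}
where I define the unnormalized kernel $G(w|x,y) := P_{\bar{W}}(w) \, P_{\hat{X}|\hat{W}}(x|w)^{\lambda_1^\star} \, P_{\hat{Y}|\hat{W}}(y|w)^{\lambda_2^\star}$, with the convention that $G(w|x,y)=0$ whenever one of the conditional probabilities vanishes.

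Next I would introduce its normalizer $Z(x,y) := \sum_w G(w|x,y)$ and note, by direct comparison with the definition of $\Lambda$, that $\log Z(x,y) = -\lambda_1^\star r_1^* - \lambda_2^\star r_2^* - \Lambda(x,y|P_{\bar{W}},P_{\hat{X}|\hat{W}},P_{\hat{Y}|\hat{W}},\lambda_1^\star,\lambda_2^\star)$. Writing $G(w|x,y) = \tilde{P}(w|x,y) Z(x,y)$ with $\tilde{P}(w|x,y) := G(w|x,y)/Z(x,y)$, the inner sum becomes
\begin{align}
\sum_w P_{W|XY}(w|x,y) \log \frac{P_{W|XY}(w|x,y)}{\tilde{P}(w|x,y)} - \log Z(x,y)
= D(P_{W|XY}(\cdot|x,y) \,\|\, \tilde{P}(\cdot|x,y)) - \log Z(x,y).
\end{align}
Averaging over $(x,y) \sim P_{XY}$ and collecting the constants yields $F = D(P_{W|XY}\|\tilde{P}|P_{XY}) + \mathsf{E}[\Lambda(X,Y|\cdots)]$, which immediately gives the identity claimed in the lemma, with the minimum attained at $P_{W|XY} = \tilde{P}$. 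A short calculation rewriting $\tilde{P}$ in terms of $\Lambda$ by substituting $Z(x,y) = \exp\{-\lambda_1^\star r_1^* - \lambda_2^\star r_2^* - \Lambda(x,y|\cdots)\}$ recovers formula (\ref{eq:GW-optimal-tentative}).

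Uniqueness of the minimizer follows from strict convexity of the conditional KL divergence in its first argument on the set of kernels supported on $\mathtt{supp}(\tilde{P}(\cdot|x,y))$ for each $(x,y) \in \mathtt{supp}(P_{XY})$; outside that support the optimizer must vanish, which justifies the boundary convention stated in the lemma. The only step requiring care is the treatment of zero entries: as long as there is some $w \in \mathtt{supp}(P_{\bar{W}})$ with both $P_{\hat{X}|\hat{W}}(x|w)>0$ and $P_{\hat{Y}|\hat{W}}(y|w)>0$ (the exceptional case excluded in the footnote), $Z(x,y)$ is positive and $\Lambda$ is finite, so the argument goes through without modification. This bookkeeping of supports is really the only potential pitfall; the rest is the standard Gibbs-variational identity.
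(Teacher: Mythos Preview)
Your proof is correct and is essentially the same as the paper's: both rewrite $F$ as an expectation of $\sum_w P_{W|XY}(w|x,y)\log\frac{P_{W|XY}(w|x,y)}{G(w|x,y)}$ for the unnormalized kernel $G$, and then minimize. The paper invokes the log-sum inequality directly on this expression, while you normalize $G$ to $\tilde{P}$ and use nonnegativity and strict convexity of $D(\cdot\|\tilde{P})$; these are the same argument in slightly different language, and your handling of the zero-probability support issue matches the paper's.
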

\begin{proof}
Without loss of optimality, we can
assume that $P_{W|XY}(w|x,y) =0$ whenever either $P_{\hat{X}|\hat{W}}(x|w) =0$ or $P_{\hat{Y}|\hat{W}}(y|w) = 0$.
Otherwise, the value of $F(P_{W|XY},P_{\bar{W}}, P_{\hat{X}|\hat{W}},P_{\hat{Y}|\hat{W}})$ is infinite.
By using the log-sum inequality (cf.~\cite{csiszar-korner:11}), we have
\begin{align}
& F(P_{W|XY},P_{\bar{W}}, P_{\hat{X}|\hat{W}},P_{\hat{Y}|\hat{W}})  \\
&= \sum_{x,y,w} P_{XY}(x,y) P_{W|XY}(w|x,y) \log \frac{P_{W|XY}(w|x,y)}{P_{\bar{W}}(w)} \\
& + \lambda_1^\star \sum_{x,y,w} P_{XY}(x,y) P_{W|XY}(w|x,y) \bigg\{ \log \frac{1}{P_{\hat{X}|\hat{W}}(x|w)} - r_1^* \bigg\} \\
 & + \lambda_2^\star \sum_{x,y,w} P_{XY}(x,y) P_{W|XY}(w|x,y) \bigg\{ \log \frac{1}{P_{\hat{Y}|\hat{W}}(y|w)} - r_2^* \bigg\} \\
&= \sum_{x,y,w} P_{XY}(x,y) P_{W|XY}(w|x,y) \log \frac{P_{W|XY}(w|x,y)}{P_{\bar{W}}(w) \exp\big\{ \lambda_1^\star \big( r_1^*-  \log \frac{1}{P_{\hat{X}|\hat{W}}(x|w) } \big) 
 + \lambda_2^\star \big( r_2^* - \log \frac{1}{P_{\hat{Y}|\hat{W}}(y|w)} \big) \big\}} \\
&\ge \san{E}[\Lambda(X,Y| P_{\bar{W}}, P_{\hat{X}|\hat{W}}, P_{\hat{Y}|\hat{W}}, \lambda_1^\star,\lambda_2^\star)],
\end{align}
where the equality holds if and only if $P_{W|XY}$ is given by \eqref{eq:GW-optimal-tentative}.
\end{proof}

Let $P_{W|XY}^\star$ be an optimal test channel that achieve $R(r_1^*,r_2^*|P_{XY})$, 
and let $P_{W^\star}$, $P_{X|W^\star}$,
and $P_{Y|W^\star}$ be corresponding output distribution and conditional distributions, respectively. 
Then, note that 
\begin{align}
R(r_1^*,r_2^*|P_{XY})
&= \min_{P_{\bar{W}}} \min_{P_{\hat{X}|\hat{W}}} \min_{P_{\hat{Y}|\hat{W}}} \min_{P_{W|XY}} F(P_{W|XY},P_{\bar{W}}, P_{\hat{X}|\hat{W}},P_{\hat{Y}|\hat{W}}) \\
&\le \min_{P_{W|XY}} F(P_{W|XY}, P_{W^\star}, P_{X|W^\star}, P_{Y|W^\star}) \label{eq:GW-tentative-optimization} \\
&\le F(P_{W|XY}^\star,  P_{W^\star}, P_{X|W^\star}, P_{Y|W^\star}) \\
&= R(r_1^*,r_2^*|P_{XY}). 
\end{align}
This implies that $P_{W|XY}^\star$ achieves the minimization in \eqref{eq:GW-tentative-optimization}.
Thus, by Lemma \ref{lemma:gray-wyner-tentative-solution}, $P_{W|XY}^\star$ must satisfy
\begin{align}
P_{W|XY}^\star(w|x,y) 
&= P_{W^\star}(w) \exp\bigg\{ \Lambda(x,y| P_{W^\star}, P_{X|W^\star}, P_{Y|W^\star}, \lambda_1^\star,\lambda_2^\star) \\
&~~~~~~~~~ + \lambda_1^\star \bigg( r_1^* -  \log \frac{1}{P_{X|W^\star}(x|w)} \bigg)
 + \lambda_2^\star \bigg( r_2^* -  \log \frac{1}{P_{Y|W^\star}(y|w)}  \bigg) \bigg\},
\end{align}
which is equivalent to 
\begin{align}
& \Lambda(x,y| P_{W^\star}, P_{X|W^\star}, P_{Y|W^\star}, \lambda_1^\star,\lambda_2^\star) \\
&= \log \frac{P_{W|XY}^\star(w|x,y)}{P_{W^\star}(w)} + \lambda_1^\star \bigg( \log \frac{1}{P_{X|W^\star}(x|w)} - r_1^* \bigg)
 + \lambda_2^\star \bigg( \log \frac{1}{P_{Y|W^\star}(y|w)} - r_2^* \bigg)
 \label{eq:GW-explicit-form-of-tilted}
\end{align}
for $w \in \mathtt{supp}(P_{W|XY}^\star(\cdot|x,y))$.

Although optimal test channels may not be unique,\footnote{In fact, since the auxiliary alphabet does not have any
semantic meaning, for a given optimal test channel, we can always produce another optimal
test channel by permuting symbols in ${\cal W}$.} we have the following important property.
\begin{lemma} \label{lemma:GW-well-defined}
Let $P_{W_1|XY}^\star$ and $P_{W_2|XY}^\star$ be optimal test channels, and let
$P_{W_i^\star}$, $P_{X|W_i^\star}$, and $P_{Y|W_i^\star}$ for $i=1,2$ be corresponding output distribution
and conditional distributions. Then, we have
\begin{align} 
\Lambda(x,y| P_{W_1^\star}, P_{X|W_1^\star}, P_{Y|W_1^\star},\lambda_1^\star, \lambda_2^\star)
= \Lambda(x,y| P_{W_2^\star}, P_{X|W_2^\star}, P_{Y|W_2^\star},\lambda_1^\star, \lambda_2^\star).
 \label{eq:GW-equivalence-tilted-for-two-channels}
\end{align}
\end{lemma}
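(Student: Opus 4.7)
The plan is to prove the equality by a \emph{disjoint-mixture} construction that lets both optimal test channels simultaneously appear as ``local pieces'' of a single larger optimal test channel, so that the explicit identity \eqref{eq:GW-explicit-form-of-tilted} can be read off in two different ways.

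Concretely, I would take fresh disjoint copies ${\cal W}_1$ and ${\cal W}_2$ of the auxiliary alphabets used by $P^\star_{W_1|XY}$ and $P^\star_{W_2|XY}$, fix any $\alpha \in (0,1)$, and define a new test channel $\tilde{P}_{W|XY}$ on ${\cal W}_1 \sqcup {\cal W}_2$ by $\tilde{P}_{W|XY}(w|x,y) = \alpha P^\star_{W_1|XY}(w|x,y)$ for $w \in {\cal W}_1$ and $(1-\alpha)P^\star_{W_2|XY}(w|x,y)$ for $w \in {\cal W}_2$. Because the two supports are disjoint, $W$ determines which side it came from, and a direct computation (just splitting the sums in $H(\tilde{W})$, $H(\tilde{W}|X,Y)$, $H(X|\tilde{W})$, $H(Y|\tilde{W})$ according to $w \in {\cal W}_1$ vs.\ $w \in {\cal W}_2$) gives $I(\tilde{W} \wedge X,Y) = \alpha I(W_1^\star \wedge X,Y) + (1-\alpha)I(W_2^\star \wedge X,Y) = R(r_1^*,r_2^*|P_{XY})$, $H(X|\tilde{W}) = r_1^*$, and $H(Y|\tilde{W}) = r_2^*$. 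Therefore $\tilde{P}_{W|XY}$ is itself optimal in the sense that it achieves $R(r_1^*,r_2^*|P_{XY})$ with the rate-constraint inequalities active; the only thing it may violate is the cardinality bound, which however plays no role in Lemma~\ref{lemma:gray-wyner-tentative-solution}.

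Next I would rerun the derivation that led to \eqref{eq:GW-explicit-form-of-tilted}, but for $\tilde{P}_{W|XY}$ in place of $P^\star_{W|XY}$. Since the three marginals $P_{\tilde{W}}, P_{X|\tilde{W}}, P_{Y|\tilde{W}}$ plugged into $F$ yield $F(\tilde{P}_{W|XY}, P_{\tilde{W}}, P_{X|\tilde{W}}, P_{Y|\tilde{W}}) = R(r_1^*,r_2^*|P_{XY})$, the chain of inequalities in the excerpt just above \eqref{eq:GW-explicit-form-of-tilted} still forces $\tilde{P}_{W|XY}$ to be the unique minimizer produced by Lemma~\ref{lemma:gray-wyner-tentative-solution}, so
\begin{align}
& \Lambda(x,y|P_{\tilde{W}}, P_{X|\tilde{W}}, P_{Y|\tilde{W}}, \lambda_1^\star, \lambda_2^\star) \\
&= \log \frac{\tilde{P}_{W|XY}(w|x,y)}{P_{\tilde{W}}(w)} + \lambda_1^\star\bigg(\log \frac{1}{P_{X|\tilde{W}}(x|w)} - r_1^*\bigg) + \lambda_2^\star\bigg(\log \frac{1}{P_{Y|\tilde{W}}(y|w)} - r_2^*\bigg)
\end{align}
for every $w \in \mathtt{supp}(\tilde{P}_{W|XY}(\cdot|x,y))$.

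Now comes the payoff: for $w \in {\cal W}_1$ the $\alpha$'s cancel in the ratio $\tilde{P}_{W|XY}(w|x,y)/P_{\tilde{W}}(w)$ and $P_{X|\tilde{W}}(x|w), P_{Y|\tilde{W}}(y|w)$ coincide with $P_{X|W_1^\star}(x|w), P_{Y|W_1^\star}(y|w)$, so the right-hand side is exactly the one that \eqref{eq:GW-explicit-form-of-tilted} gives for $W_1^\star$ and hence equals $\Lambda(x,y|P_{W_1^\star}, P_{X|W_1^\star}, P_{Y|W_1^\star}, \lambda_1^\star, \lambda_2^\star)$; choosing $w \in {\cal W}_2$ produces the analogous identity for $W_2^\star$. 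Since the left-hand side depends only on $(x,y)$, both expressions are equal, establishing \eqref{eq:GW-equivalence-tilted-for-two-channels}. For every $(x,y)$ with $P_{XY}(x,y) > 0$ the conditional supports on each side are nonempty, so suitable $w$'s exist. The only point that needs a little care—and is the main obstacle I foresee—is verifying that the mixture really does meet the hypotheses of Lemma~\ref{lemma:gray-wyner-tentative-solution} (equivalently, that the chain of inequalities collapses to an equality); once the disjoint-support computation of $I, H(X|\cdot), H(Y|\cdot)$ is in hand, everything else is a direct substitution.
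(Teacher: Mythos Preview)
Your proposal is correct and is essentially the same argument as the paper's. The paper phrases the construction as a time-sharing channel $P^\star_{TW|XY}(t,w|x,y) = \alpha_t P^\star_{W_t|XY}(w|x,y)$ on the product alphabet $\{1,2\} \times {\cal W}$, which is isomorphic to your disjoint-union mixture on ${\cal W}_1 \sqcup {\cal W}_2$; in both cases one then invokes \eqref{eq:GW-explicit-form-of-tilted} for the enlarged optimal channel and reads off the equality by choosing $w$ on each side, the mixing weights cancelling in the ratio.
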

\begin{proof}
Let ${\cal T} = \{1,2\}$ be time-sharing alphabet, and let 
$\tilde{R}(r_1,r_2|P_{XY})$ be defined in the same manner as $R(r_1,r_2|P_{XY})$ where the auxiliary alphabet
is extended from ${\cal W}$ to ${\cal T} \times {\cal W}$. In fact, by the support lemma (cf.~\cite{csiszar-korner:11}), 
this extension does not
change the value, i.e., 
\begin{align}
\tilde{R}(r_1,r_2|P_{XY}) = R(r_1,r_2|P_{XY}).
\end{align}
We also note that, for $\alpha_1,\alpha_2>0$ with $\alpha_1 + \alpha_2 = 1$,  a test channel $P_{TW|XY}^\star$ defined by 
\begin{align} \label{eq:GW-time-shared-test-channel}
P_{TW|XY}^\star(t,w|x,y) := \alpha_t P_{W_t|XY}^\star(w|x,y)
\end{align}
is an optimal test channel for $\tilde{R}(r_1^*,r_2^*|P_{XY})$. Thus, we can apply the same argument that leads to
\eqref{eq:GW-explicit-form-of-tilted}, and we conclude that the value of
\begin{align}
\log \frac{P_{TW|XY}^\star(t,w|x,y)}{P_{T^\star W^\star}(t,w)} + \lambda_1^\star \bigg( \log \frac{1}{P_{X|T^\star W^\star}(x|t,w)} - r_1^* \bigg)
 + \lambda_2^\star \bigg( \log \frac{1}{P_{Y|T^\star W^\star}(y|t,w)} - r_2^* \bigg)
\end{align}
does not depend on $(t,w) \in \mathtt{supp}(P_{TW|XY}^\star(\cdot|x,y))$, where $P_{T^\star W^\star}$,
$P_{X|T^\star W^\star}$, and $P_{Y|T^\star W^\star}$ are output distribution and conditional distributions
induced by $P_{TW|XY}^\star$.
This together with \eqref{eq:GW-time-shared-test-channel} imply that,\footnote{Also note that
$\mathtt{supp}(P_{TW|XY}^\star(\cdot|x,y)) = \{1\} \times \mathtt{supp}(P_{W_1|XY}^\star(\cdot|x,y)) \cup \{2\} \times \mathtt{supp}(P^\star_{W_2|XY}(\cdot|x,y))$} for any $w \in \mathtt{supp}(P_{W_1|XY}^\star(\cdot|x,y))$ and $w^\prime \in \mathtt{supp}(P_{W_2|XY}^\star(\cdot|x,y))$,
\begin{align}
& \log \frac{P_{W_1|XY}^\star(w|x,y)}{P_{W_1^\star}(w)} + \lambda_1^\star \bigg( \log \frac{1}{P_{X|W_1^\star}(x|w)} - r_1^* \bigg)
 + \lambda_2^\star \bigg( \log \frac{1}{P_{Y|W_1^\star}(y|w)} - r_2^* \bigg) \\
&=  \log \frac{P_{W_2|XY}^\star(w^\prime|x,y)}{P_{W_2^\star}(w^\prime)} + \lambda_1^\star \bigg( \log \frac{1}{P_{X|W_2^\star}(x|w^\prime)} - r_1^* \bigg)
 + \lambda_2^\star \bigg( \log \frac{1}{P_{Y|W_2^\star}(y|w^\prime)} - r_2^* \bigg).
\end{align} 
Thus, we have \eqref{eq:GW-equivalence-tilted-for-two-channels}.
\end{proof}

Because of Lemma \ref{lemma:GW-well-defined}, the following tilted information density is 
well-defined irrespective of the choice of optimal test channels.\footnote{\label{footnote:uniqueness-tilted}
Because of \cite[Theorem 2.4.2]{berger:71},
the $d$-tilted information in the rate-distortion problem is also defined irrespective of the choice of optimal test channels;
the minimum and the maximum in \cite[Remark 9]{kostina:12} are superfluous. See also \cite[Remark in p.~69]{csiszar:74b}.}
\begin{definition}
Let $P_{W|XY}^\star$ be an optimal test channel, and let $P_{W^\star}$, $P_{X|W^\star}$, and $P_{Y|W^\star}$
be corresponding output distribution and conditional distributions, respectively. Then, the tilted information density
for Gray-Wyner network (with respect to $r_0$-axis) is defined by
\begin{align}
\jmath_{XY}(x,y) = \jmath_{XY}(x,y|r_1^*,r_2^*) := \Lambda(x,y|P_{W^\star},P_{X|W^\star},P_{Y|W^\star},\lambda_1^\star,\lambda_2^\star). 
\end{align}
\end{definition}

From \eqref{eq:GW-explicit-form-of-tilted}, we have
\begin{align}
\jmath_{XY}(x,y) = \log \frac{P_{W|XY}^\star(w|x,y)}{P_{W^\star}(w)} + \lambda_1^\star\bigg( \log \frac{1}{P_{X|W^\star}(x|w)} - r_1^* \bigg)
 + \lambda_2^\star \bigg( \log \frac{1}{P_{Y|W^\star}(y|w)} - r_2^* \bigg)
 \label{eq:GW-explicit-form-of-tilted-information}
\end{align}
for every $w \in \mathtt{supp}(P_{W|XY}^\star(\cdot|x,y))$,\footnote{In contrast to
the property of $d$-tilted information in \cite[Eq.~(17)]{kostina:12}, 
\eqref{eq:GW-explicit-form-of-tilted-information} only holds for $w \in \mathtt{supp}(P_{W|XY}^\star(\cdot|x,y))$
instead of $w \in \mathtt{supp}(P_{W^\star})$. This stems from the fact that either $\log \frac{1}{P_{X|W^\star}(x|w)}$
or $\log \frac{1}{P_{Y|W^\star}(y|w)}$ may be infinite, while
distortion measures that may take infinity are excluded in \cite{kostina:12}.
} and thus
\begin{align}
\san{E}[\jmath_{XY}(X,Y)] = R(r_1^*,r_2^*|P_{XY}).
\end{align}

Now we consider to differentiate $R(r_1^*,r_2^*|P_\theta)$ with respect to $\theta$
around $\xi = \theta(P_{XY})$ (cf.~Section \ref{subsection:notation} for the parametric notation).
For a technical reason, 
we assume that there exists a set of optimal test channels
$\{ P_{W|X_\theta Y_\theta} \}_{\theta \in {\cal N}(\xi)}$ around a neighbour ${\cal N}(\xi)$ of  $\xi$ such that 
$P_{W|X_\theta Y_\theta} $ is differentiable.\footnote{In fact, this regularity condition can be replaced by 
any regularity condition that guarantees the validity of \eqref{eq:derivative-coincide-0}.}  
Also, we take ${\cal N}(\xi)$ sufficiently small so that 
\begin{align} \label{eq:gray-wyner-support-equivalent-assumption}
\mathtt{supp}( P_{W|X_\xi Y_\xi}^\star(\cdot|x,y)) \subset \mathtt{supp}( P_{W|X_\theta Y_\theta}^\star(\cdot|x,y))
\end{align}
for every $x,y$.
The following lemma can be proved in a similar manner as  \cite[Theorem 2.2]{kostina:thesis}.
\begin{lemma} \label{lemma:GW-derivative}
Let $\xi = \theta(P_{XY})$. Then, we have
\begin{align}
\frac{\partial R(r_1^*,r_2^*|P_\theta)}{\partial \theta_i} \bigg|_{\theta = \xi} = \jmath_{X Y}(i) - \jmath_{X Y}(m).
\end{align}
\end{lemma}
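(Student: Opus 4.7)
The plan is to apply an envelope-theorem argument to a Lagrangian formulation. For $\theta \in {\cal N}(\xi)$, define
\begin{align*}
L_\theta(P_{W|XY}) := I(W \wedge X, Y) + \lambda_1^\star(H(X|W) - r_1^*) + \lambda_2^\star(H(Y|W) - r_2^*),
\end{align*}
with all information quantities taken under $P_\theta \cdot P_{W|XY}$. Since $\lambda_1^\star, \lambda_2^\star > 0$, both entropy constraints bind at the optimizer of the original constrained problem throughout ${\cal N}(\xi)$, and thus $R(r_1^*, r_2^*|P_\theta) = L_\theta(P_{W|X_\theta Y_\theta}^\star)$. Moreover, strong duality at $\xi$ combined with Lemma~\ref{lemma:gray-wyner-tentative-solution} identifies $P_{W|X_\xi Y_\xi}^\star$ as an unconstrained minimizer of $L_\xi(\cdot)$ on the simplex of conditional distributions.

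I will then differentiate $R(r_1^*, r_2^*|P_\theta) = L_\theta(P_{W|X_\theta Y_\theta}^\star)$ at $\xi$ via the chain rule. The perturbation $dP_{W|X_\theta Y_\theta}^\star / d\theta_i$ preserves $\sum_w P_{W|XY}(w|x,y) = 1$ and, by \eqref{eq:gray-wyner-support-equivalent-assumption}, is concentrated on the support of $P_{W|X_\xi Y_\xi}^\star(\cdot|x,y)$; first-order optimality of $P_{W|X_\xi Y_\xi}^\star$ for $L_\xi(\cdot)$ therefore kills the ``channel'' piece of the total derivative, leaving
\begin{align*}
\frac{\partial R(r_1^*, r_2^*|P_\theta)}{\partial \theta_i}\bigg|_{\theta = \xi} = \frac{\partial L_\theta(P_{W|X_\xi Y_\xi}^\star)}{\partial \theta_i}\bigg|_{\theta = \xi}.
\end{align*}
Writing $L_\theta(P_{W|X_\xi Y_\xi}^\star) = \sum_{x,y} P_\theta(x,y)\, \ell(x, y, \theta)$, where $\ell(x,y,\theta)$ depends on $\theta$ only through the induced marginals $P_{W,\theta}, P_{X|W,\theta}, P_{Y|W,\theta}$, the direct term from differentiating the weight $P_\theta(x,y)$ produces $\ell(i, \xi) - \ell(m, \xi)$; collapsing the $w$-sum via $\sum_w P_{W|X_\xi Y_\xi}^\star(w|x,y) = 1$ and invoking \eqref{eq:GW-explicit-form-of-tilted-information} identifies $\ell(x, y, \xi) = \jmath_{XY}(x, y)$, so this contributes exactly $\jmath_{XY}(i) - \jmath_{XY}(m)$.

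The indirect term $\san{E}_\xi[\partial_{\theta_i} \ell(X, Y, \xi)]$ vanishes because each of its three summands is of the form (derivative of an induced marginal) divided by that same marginal, and after summing against $P_{XY}(x,y)\, P_{W|X_\xi Y_\xi}^\star(w|x,y) = P_{XYW^\star}(x,y,w)$ the denominator cancels, leaving $\partial_{\theta_i}$ applied to a normalization ($\sum_w P_{W,\theta}(w) = 1$, $\sum_x P_{X|W,\theta}(x|w) = 1$, or $\sum_y P_{Y|W,\theta}(y|w) = 1$), which is zero. The main obstacle is justifying the envelope step cleanly: one needs to verify that $P_{W|X_\xi Y_\xi}^\star$ is a true minimizer of $L_\xi(\cdot)$ on the full simplex (not only within the entropy-constraint set) and that the smooth perturbation genuinely lies in the tangent space of the relevant support. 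Both are delivered by Lemma~\ref{lemma:gray-wyner-tentative-solution} and the regularity condition \eqref{eq:gray-wyner-support-equivalent-assumption}, but the bookkeeping is subtle because $P_{W^\star}$ need not have full support on ${\cal W}$.
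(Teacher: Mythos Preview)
Your approach is correct and closely parallels the paper's, though the packaging differs. The paper writes $R(r_1^*,r_2^*|P_\theta) = \sum_{x,y} P_\theta(x,y)\, \jmath_{X_\theta Y_\theta}(x,y)$ and differentiates both factors; the ``indirect'' piece $\san{E}_\xi[\partial_{\theta_i}\jmath_{X_\theta Y_\theta}]$ is then shown to vanish by expressing $\jmath_\theta$ via \eqref{eq:GW-explicit-form-of-tilted-information} (this is where the paper invokes \eqref{eq:gray-wyner-support-equivalent-assumption}) and reducing each resulting log-derivative to the $\theta$-derivative of a normalization constant. Your envelope formulation does the same work with a slightly different split: by freezing $\lambda^\star$ at its $\xi$-value and isolating the channel-derivative as a separate piece, you avoid differentiating $\lambda_{i,\theta}^\star$ (which the paper handles via $H(X_\xi|W_\xi^\star)=r_1^*$, $H(Y_\xi|W_\xi^\star)=r_2^*$), and your ``indirect term'' computation is exactly the paper's normalization argument for the $P_{W_\theta^\star}$, $P_{X_\theta|W_\theta^\star}$, $P_{Y_\theta|W_\theta^\star}$ factors. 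The paper's remaining $\log P_{W|X_\theta Y_\theta}^\star$ term is precisely what your envelope step absorbs.

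One correction: your citation of \eqref{eq:gray-wyner-support-equivalent-assumption} for the claim that $dP_{W|X_\theta Y_\theta}^\star/d\theta_i|_{\theta=\xi}$ is concentrated on $\mathtt{supp}(P_{W|X_\xi Y_\xi}^\star(\cdot|x,y))$ points the wrong way---that assumption says the support at $\xi$ is \emph{contained in} the support at $\theta$, not conversely. The correct justification is simply that $\theta \mapsto P_{W|X_\theta Y_\theta}^\star(w|x,y)$ is assumed differentiable, is nonnegative throughout ${\cal N}(\xi)$, and vanishes at the interior point $\xi$, which forces its gradient there to be zero. With that fix your envelope step is sound: both $\pm\delta$ are feasible directions in the simplex face determined by $\mathtt{supp}(P_{W|X_\xi Y_\xi}^\star)$, $L_\xi$ is smooth on that face, and first-order optimality gives zero directional derivative.
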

\begin{proof}
Let $(X_\theta,Y_\theta) \sim P_\theta$. Since we can write 
\begin{align}
R(r_1^*,r_2^* | P_\theta) = \sum_{x,y} P_\theta(x,y) \jmath_{X_\theta Y_\theta}(x,y),
\end{align}
we have
\begin{align}
\frac{\partial R(r_1^*,r_2^* |P_\theta)}{\partial \theta_i} \bigg|_{\theta=\xi}
&= \sum_{x,y} \frac{\partial P_\theta(x,y)}{\partial \theta_i} \bigg|_{\theta=\xi} \jmath_{X_\xi Y_\xi}(x,y)
 + \sum_{x,y} P_\xi(x,y) \frac{\partial \jmath_{X_\theta Y_\theta}(x,y)}{\partial \theta_i} \bigg|_{\theta=\xi} \\
&= \jmath_{X_\xi Y_\xi}(i) - \jmath_{X_\xi Y_\xi}(m) + \sum_{x,y} P_\xi(x,y) \frac{\partial \jmath_{X_\theta Y_\theta}(x,y)}{\partial \theta_i} \bigg|_{\theta=\xi}.
\end{align}
We now evaluate the second term as follows. By \eqref{eq:GW-explicit-form-of-tilted-information} and the assumption \eqref{eq:gray-wyner-support-equivalent-assumption},
we have
\begin{align}
\san{E}[ \jmath_{X_\theta Y_\theta}(X_\xi, Y_\xi)]
&= \sum_{w,x,y} P_\xi(x,y) P_{W|X_\xi Y_\xi}^\star(w|x,y) \jmath_{X_\theta Y_\theta}(x,y) \\
&= \sum_{w,x,y} P_\xi(x,y) P_{W|X_\xi Y_\xi}^\star(w|x,y) \bigg[ \log \frac{P_{W|X_\theta Y_\theta}^\star(w|x,y)}{P_{W_\theta^\star}(w)} \\
& ~~~+ \lambda_{1,\theta}^\star \bigg(\log \frac{1}{P_{X_\theta|W_\theta^\star}(x|w)} - r_1^* \bigg)
  + \lambda_{2,\theta}^\star \bigg(\log \frac{1}{P_{Y_\theta|W_\theta^\star}(y|w)} - r_2^* \bigg)  \bigg],
\end{align}
where $\lambda_{i,\theta}$ is defined by \eqref{eq:slope-of-plane} for $P_\theta$.
Thus, we have\footnote{In the following calculation, the base of logarithm is $e$ instead of $2$, which is irrelevant to the final answer.}
\begin{align}
\sum_{x,y} P_\xi(x,y) \frac{\partial \jmath_{X_\theta Y_\theta}(x,y)}{\partial \theta_i} \bigg|_{\theta=\xi} 
&= \frac{\partial \san{E}[ \jmath_{X_\theta Y_\theta}(X_\xi, Y_\xi)]}{\partial \theta_i} \bigg|_{\theta = \xi} \\
&= \frac{\partial \san{E}[\log P_{W|X_\theta Y_\theta}^\star(W_\xi^\star|X_\xi,Y_\xi)] }{\partial \theta_i} \bigg|_{\theta=\xi}
 - \frac{\partial \san{E}[\log P_{W_\theta^\star}(W_\xi^\star)]}{\partial \theta_i} \bigg|_{\theta =\xi} \\
&~~~+ \frac{\partial \lambda_{1,\theta}^\star}{\partial \theta_i} \bigg|_{\theta=\xi} \bigg( H(X_\xi|W_\xi^\star) - r_1^* \bigg)
 - \lambda_{1,\xi}^\star \frac{\partial \san{E}[\log P_{X_\theta|W_\theta^\star}(X_\xi|W_\xi^\star)]}{\partial \theta_i} \bigg|_{\theta=\xi}\\
&~~~+ \frac{\partial \lambda_{2,\theta}^\star}{\partial \theta_i} \bigg|_{\theta=\xi} \bigg( H(Y_\xi|W_\xi^\star) - r_2^* \bigg)
 - \lambda_{2,\xi}^\star \frac{\partial \san{E}[\log P_{Y_\theta|W_\theta^\star}(Y_\xi|W_\xi^\star)]}{\partial \theta_i} \bigg|_{\theta=\xi} \\
&= \frac{\partial}{\partial \theta_i} \san{E}\bigg[ \frac{P_{W|X_\theta Y_\theta}^\star(W_\xi^\star|X_\xi,Y_\xi)}{P_{W|X_\xi Y_\xi}^\star(W_\xi^\star|X_\xi,Y_\xi)} \bigg] \bigg|_{\theta=\xi}
 - \frac{\partial }{\partial \theta_i} \san{E}\bigg[ \frac{P_{W_\theta^\star}(W_\xi^\star)}{P_{W_\xi^\star}(W_\xi^\star)} \bigg] \bigg|_{\theta=\xi} \\
 &~~~ - \lambda_{1,\xi}^\star \frac{\partial}{\partial \theta_i} \san{E}\bigg[ \frac{P_{X_\theta|W_\theta^\star}(X_\xi|W_\xi^\star)}{P_{X_\xi|W_\xi^\star}(X_\xi|W_\xi^\star)} \bigg] \bigg|_{\theta=\xi} 
 - \lambda_{2,\xi}^\star \frac{\partial}{\partial \theta_i} \san{E}\bigg[ \frac{P_{Y_\theta|W_\theta^\star}(Y_\xi|W_\xi^\star)}{P_{Y_\xi|W_\xi^\star}(Y_\xi|W_\xi^\star)} \bigg] \bigg|_{\theta=\xi} \\
&= 0, \label{eq:derivative-coincide-0}
\end{align}
where the third equality follows from $H(X_\xi|W_\xi^\star) = r_1^*$ and $H(Y_\xi|W_\xi^\star) = r_2^*$.
\end{proof}

\section{Coding Theorem} \label{section:coding-theorem}

In this section, we characterize the second-order region of the Gray-Wyner network.
We first describe the statement, and then it will be proved in 
Sections \ref{subsection:achievability} and \ref{subsection:converse}.

\begin{theorem} \label{theorem:main}
For a given boundary point $(r_0^*,r_1^*,r_2^*) \in {\cal R}_{\mathtt{GW}}(P_{XY})$,
suppose that the function $R(r_1,r_2|P_\theta)$ defined by \eqref{eq:optimal-r0-function} is twice differentiable 
with respect to $(r_1,r_2,\theta)$ around $(r_1^*,r_2^*,\theta(P_{XY}))$ and those second derivatives are bounded.
Also, a regularity condition for Lemma \ref{lemma:GW-derivative} is satisfied. Then, we have
\begin{align}
{\cal L}_{\mathtt{GW}}(\varepsilon;r_0^*,r_1^*,r_2^*) 
 = \big\{ (L_0,L_1,L_2) : L_0 + \lambda_1^\star L_1 + \lambda_2^\star L_2 \ge \sqrt{V_{XY}} \san{Q}^{-1}(\varepsilon) \big\}
\end{align}
for $0<\varepsilon < 1$, where $\lambda_i^\star$ is given by \eqref{eq:slope-of-plane} and 
\begin{align}
V_{XY} := \san{V}\big[ \jmath_{XY}(X,Y) \big].
\end{align}
\end{theorem}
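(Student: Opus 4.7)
The plan is to prove both inclusions by the method of types, reducing both directions to the behavior of the empirical sum $\sum_{t=1}^n \jmath_{XY}(X_t,Y_t)$, to which the central limit theorem applies. The guiding identity, obtained by Taylor-expanding $R(r_1^*,r_2^*|P_\theta)$ around $\theta(P_{XY})$ using Lemma~\ref{lemma:GW-derivative} and the assumed boundedness of the second derivatives of $R$, is
\begin{align}
n R(r_1^*,r_2^*|\hat{P}) = \sum_{t=1}^n \jmath_{XY}(X_t,Y_t) + O(1),
\end{align}
valid whenever the joint type $\hat{P}$ of $(X^n,Y^n)$ lies within a $\Theta(1/\sqrt{n})$-band of $P_{XY}$. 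The linear term $\sum_{i=1}^{m-1}(\hat{P}(i)-P_{XY}(i))(\jmath_{XY}(i)-\jmath_{XY}(m))$ telescopes to $\sum_{x,y}\hat{P}(x,y)\jmath_{XY}(x,y)-R(r_1^*,r_2^*|P_{XY})$ after using $\sum_i(\hat{P}(i)-P_{XY}(i))=0$, and the quadratic remainder is $O(\|\hat{P}-P_{XY}\|^2)$; atypical types with $\|\hat{P}-P_{XY}\|\ge n^{-1/2}\log n$ are dismissed by standard concentration and contribute $o(1)$ to both the error and correct probabilities.

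For the achievability part, I invoke the type-covering lemma in the style of Yu--Speyer \cite{YuSpe:93} and \cite[Chapter~9]{csiszar-korner:11}: for each joint type $\hat{P}=P_{\bar{X}\bar{Y}}$ near $P_{XY}$, select an optimal test channel $P_{\bar{W}|\bar{X}\bar{Y}}^\star$ achieving $R(r_1^*,r_2^*|\hat{P})$ and cover ${\cal T}_{\bar{X}\bar{Y}}^n$ by $\exp\{nI(\bar{W}\wedge \bar{X},\bar{Y})+o(\sqrt{n})\}$ sequences $\bm{w}$. The common encoder transmits the index of a covering $\bm{w}$ and the private encoders transmit $\bm{x}$, $\bm{y}$ inside ${\cal T}_{\bar{X}|\bar{W}}^n(\bm{w})$ and ${\cal T}_{\bar{Y}|\bar{W}}^n(\bm{w})$. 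By convexity of ${\cal R}_{\mathtt{GW}}^*(P_{XY})$ and a tangential perturbation of the test channel (whose $\ell^\infty$ cost on each coordinate is controlled by the slope $(\lambda_1^\star,\lambda_2^\star)$), the three rate budgets $nr_i^*+\sqrt{n}L_i$ simultaneously suffice whenever the single scalar inequality $nR(r_1^*,r_2^*|\hat{P}) \le n r_0^* + \sqrt{n}(L_0+\lambda_1^\star L_1+\lambda_2^\star L_2)+o(\sqrt{n})$ holds. Combined with the guiding identity and Berry--Esseen applied to the i.i.d.\ sum of $\jmath_{XY}(X_t,Y_t)$, this bounds $\Pe(\Phi_n|P_{XY}^n)$ by $\san{Q}((L_0+\lambda_1^\star L_1+\lambda_2^\star L_2)/\sqrt{V_{XY}})+o(1)$, establishing the achievability inclusion.

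For the converse I refine the perturbation argument of Gu--Effros \cite{GuEff:09,GuEff:11}. Restricting the code $\Phi_n$ to each joint type class ${\cal T}_{\bar{X}\bar{Y}}^n$ and perturbing the underlying source distribution in the direction of that class, the Gu--Effros technique produces a per-type lower bound of the form $\log|{\cal M}_0^{(n)}| + \lambda_1^\star \log|{\cal M}_1^{(n)}| + \lambda_2^\star \log|{\cal M}_2^{(n)}| \ge nR(r_1^*,r_2^*|P_{\bar{X}\bar{Y}}) - o(\sqrt{n})$ on every type class whose conditional correct probability is bounded below. Applying the guiding identity inside the probability and then Berry--Esseen gives, via the contrapositive, $\liminf_{n\to\infty}\Pe(\Phi_n|P_{XY}^n) \ge \san{Q}((L_0+\lambda_1^\star L_1+\lambda_2^\star L_2)/\sqrt{V_{XY}})$ whenever \eqref{eq:definition-second-achievability-0}--\eqref{eq:definition-second-achievability-2} hold, matching the achievability side. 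The principal obstacle is the uniform $o(\sqrt{n})$ control of the Gu--Effros slack across the entire $\Theta(1/\sqrt{n})$-band of types around $P_{XY}$: the original argument was calibrated to the strong converse, so each approximation must be resharpened to the second-order scale, and the cardinality factors $|{\cal P}_n({\cal X}\times{\cal Y})| = n^{O(1)}$ and the tangential test-channel perturbation that converts the three entropic constraints into one scalar constraint must be tracked at exactly that precision.
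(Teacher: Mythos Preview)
Your plan matches the paper's proof: type-covering achievability, a refined Gu--Effros per-type converse, Taylor expansion via Lemma~\ref{lemma:GW-derivative}, and the central limit theorem. The execution differs from the paper in one structural respect and contains one error worth noting.

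The paper does not split the expansion into a $\theta$-direction ``guiding identity'' plus a separate $(r_1,r_2)$-direction step (your ``tangential perturbation of the test channel'' in the direct part, your Lagrangian linear combination in the converse). Instead it sets $r_{i,n}=r_i^*+L_i/\sqrt n+o(1/\sqrt n)$ and Taylor-expands $R(r_{1,n},r_{2,n}|\hat P)$ jointly in $(r_1,r_2,\theta)$; the $-\lambda_i^\star L_i/\sqrt n$ terms then fall out from $\partial R/\partial r_i=-\lambda_i^\star$. In the achievability this means the paper simply uses the test channel optimal for $R(r_{1,n},r_{2,n}|\hat P)$ (Corollary~\ref{corollary:code-construction}), which automatically meets the private-rate budgets, so no separate channel perturbation is needed. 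In the converse, the three Gu--Effros inequalities (Lemma~\ref{lemma:fixed-type-converse}) are combined as $(r_{0,n},r_{1,n},r_{2,n})\in{\cal R}_{\mathtt{GW}}^*(\hat P)$, i.e.\ $r_{0,n}\ge R(r_{1,n},r_{2,n}|\hat P)$, before expanding.

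Your stated per-type converse bound $\log|{\cal M}_0^{(n)}|+\lambda_1^\star\log|{\cal M}_1^{(n)}|+\lambda_2^\star\log|{\cal M}_2^{(n)}|\ge nR(r_1^*,r_2^*|\hat P)-o(\sqrt n)$ is dimensionally off: the left side carries the order-$n$ term $n(\lambda_1^\star r_1^*+\lambda_2^\star r_2^*)$, making the inequality vacuous at the $\sqrt n$ scale. Even after restoring the missing $n(\lambda_1^\star r_1^*+\lambda_2^\star r_2^*)$ on the right, the $\lambda^\star$-weighted sum of the three Gu--Effros inequalities only yields the Lagrangian dual $\min_{r_1,r_2}[R(r_1,r_2|\hat P)+\lambda_1^\star r_1+\lambda_2^\star r_2]$, which lies \emph{below} $R(r_1^*,r_2^*|\hat P)+\lambda_1^\star r_1^*+\lambda_2^\star r_2^*$ in general (since $\lambda^\star$ is the slope at $P_{XY}$, not at $\hat P$). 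Closing that gap requires an additional second-order envelope argument; the paper's route through $r_{0,n}\ge R(r_{1,n},r_{2,n}|\hat P)$ sidesteps it entirely.
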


\subsection{Proof of Achievability} \label{subsection:achievability}

In this section, we prove the achievability part of Theorem \ref{theorem:main}.
For each type $P_{\bar{X}\bar{Y}} \in {\cal P}_n({\cal X}\times {\cal Y})$, we pick a conditional type
$P_{\bar{W}|\bar{X}\bar{Y}} \in {\cal P}_n({\cal W}|{\cal X}\times{\cal Y})$, and then construct 
a code ${\cal C}_n \subset {\cal T}_{\bar{W}}^n$ such that, for every $(\bm{x},\bm{y}) \in {\cal T}_{\bar{X}\bar{Y}}^n$,
there exists $\bm{w} \in {\cal C}_n$ satisfying $(\bm{w},\bm{x},\bm{y}) \in {\cal T}_{\bar{W}\bar{X}\bar{Y}}^n$.
Basic strategy is the same as the covering lemma in the rate distortion
(cf.~\cite{YuSpe:93} and \cite[Chapter 9]{csiszar-korner:11}).

\begin{lemma} \label{lemma:code-construction}
Suppose that $n \ge n_0(|{\cal X}|,|{\cal Y}|,|{\cal W}|)$.
Given type $P_{\bar{X}\bar{Y}} \in {\cal P}_n({\cal X}\times {\cal Y})$ and any test channel
$P_{W|\bar{X}\bar{Y}}$ (not necessarily conditional type), there exists a conditional type
$P_{\bar{W}|\bar{X}\bar{Y}}$ satisfying
\begin{align} \label{eq:truncated-condition}
\big| P_{\bar{W}|\bar{X}\bar{Y}}(w|x,y) - P_{W|\bar{X}\bar{Y}}(w|x,y) \big| \le \frac{1}{n P_{\bar{X}\bar{Y}}(x,y)}
\end{align}
for every $(x,y) \in \mathtt{supp}(P_{\bar{X}\bar{Y}})$ and $w \in \mathtt{supp}(P_{W|\bar{X}\bar{Y}}(\cdot|x,y))$,
and a subset ${\cal C}_n \subset {\cal T}_{\bar{W}}^n$ such that 
\begin{align}
|{\cal C}_n| \le \exp\big\{ n I(\bar{W} \wedge \bar{X},\bar{Y}) + (|{\cal X}||{\cal Y}||{\cal W}| + 4) \log(n+1) \big\}
\end{align}
and such that, for any $(\bm{x},\bm{y}) \in {\cal T}_{\bar{X}\bar{Y}}^n$, there exists $\bm{w} \in {\cal C}_n$ 
satisfying $(\bm{w},\bm{x},\bm{y}) \in {\cal T}_{\bar{W}\bar{X}\bar{Y}}^n$.
\end{lemma}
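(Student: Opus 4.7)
The plan splits the lemma into two largely independent tasks: (i) round the arbitrary test channel $P_{W|\bar{X}\bar{Y}}$ to a genuine conditional $n$-type $P_{\bar{W}|\bar{X}\bar{Y}}$ satisfying \eqref{eq:truncated-condition}, and (ii) show, by a random-coding argument inside the type class ${\cal T}_{\bar{W}}^n$, that a covering ${\cal C}_n$ of the stated size exists. For (i), observe that for each $(x,y) \in \mathtt{supp}(P_{\bar{X}\bar{Y}})$ the value $n_{xy} := n P_{\bar{X}\bar{Y}}(x,y)$ is a positive integer. I would distribute these $n_{xy}$ coordinates among the symbols of ${\cal W}$ by choosing nonnegative integers $\{k_w\}$ summing to $n_{xy}$ whose normalizations $k_w/n_{xy}$ best approximate $P_{W|\bar{X}\bar{Y}}(w|x,y)$ (the largest-remainder rule works). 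The resulting $P_{\bar{W}|\bar{X}\bar{Y}}(w|x,y) := k_w/n_{xy}$ is a conditional type compatible with $P_{\bar{X}\bar{Y}}$, and each coordinate differs from the target by at most $1/n_{xy}$, giving \eqref{eq:truncated-condition}. Requiring $n \ge n_0(|{\cal X}|,|{\cal Y}|,|{\cal W}|)$ ensures $n_{xy} \ge |{\cal W}|$ on the support of $P_{\bar{X}\bar{Y}}$, so we may bias the rounding to keep the support of $P_{\bar{W}|\bar{X}\bar{Y}}(\cdot|x,y)$ no smaller than that of $P_{W|\bar{X}\bar{Y}}(\cdot|x,y)$.

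For (ii), let $P_{\bar{W}\bar{X}\bar{Y}} := P_{\bar{W}|\bar{X}\bar{Y}} P_{\bar{X}\bar{Y}}$ and let $P_{\bar{W}}$ be its ${\cal W}$-marginal. Draw $N$ codewords $\bm{W}_1,\ldots,\bm{W}_N$ independently and uniformly from ${\cal T}_{\bar{W}}^n$, where $N$ equals the upper bound stated in the lemma. For each fixed $(\bm{x},\bm{y}) \in {\cal T}_{\bar{X}\bar{Y}}^n$, the standard type-counting estimates yield
\begin{align*}
p := \Pr\big\{ (\bm{W}_1,\bm{x},\bm{y}) \in {\cal T}_{\bar{W}\bar{X}\bar{Y}}^n \big\}
 = \frac{|{\cal T}_{\bar{W}|\bar{X}\bar{Y}}^n(\bm{x},\bm{y})|}{|{\cal T}_{\bar{W}}^n|}
 \ge (n+1)^{-|{\cal X}||{\cal Y}||{\cal W}|}\, \exp\{-n I(\bar{W} \wedge \bar{X},\bar{Y})\}.
\end{align*}
Consequently the probability that no $\bm{W}_i$ joins $(\bm{x},\bm{y})$ into the desired joint type is at most $(1-p)^N \le \exp(-Np)$. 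A union bound over the at most $\exp(n H(\bar{X},\bar{Y})) \le \exp(n \log(|{\cal X}||{\cal Y}|))$ pairs in ${\cal T}_{\bar{X}\bar{Y}}^n$ then shows that the expected number of uncovered $(\bm{x},\bm{y})$ is bounded by $\exp(n H(\bar{X},\bar{Y}) - Np)$. With the stated choice of $N$, the extra $(n+1)^4$ slack in the exponent dominates $n H(\bar{X},\bar{Y})$ for all sufficiently large $n$, so $Np \gg n H(\bar{X},\bar{Y})$ and this expectation is $o(1)$; in particular it is $< 1$, so some realization yields a covering ${\cal C}_n$.

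The only mildly delicate point is the bookkeeping in the exponent: we need the polynomial prefactor $(n+1)^{|{\cal X}||{\cal Y}||{\cal W}|+4}$ to simultaneously absorb the type-counting penalty $(n+1)^{|{\cal X}||{\cal Y}||{\cal W}|}$ in the lower bound on $p$ and to dominate the cardinality of ${\cal T}_{\bar{X}\bar{Y}}^n$ after taking the union bound, and this is what forces the extra factor $(n+1)^4$. Everything else is routine type-method machinery. Neither the analytic difficulties of the second-order problem nor the tilted-information structure enter at this stage; Lemma~\ref{lemma:code-construction} is a purely combinatorial covering result that will be fed, with carefully chosen $P_{W|\bar{X}\bar{Y}}$ depending on the joint type of the source, into the achievability argument of Theorem~\ref{theorem:main} in the next subsection.
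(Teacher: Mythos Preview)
Your proposal is correct and follows essentially the same route as the paper: round $P_{W|\bar{X}\bar{Y}}$ to a conditional type, then draw codewords uniformly from ${\cal T}_{\bar{W}}^n$ and combine the type-counting lower bound on the per-pair covering probability with $(1-p)^N\le e^{-Np}$ and a union bound over ${\cal T}_{\bar{X}\bar{Y}}^n$. One small caveat: your aside that ``$n\ge n_0$ ensures $n_{xy}\ge|{\cal W}|$'' is not valid in general (for a type one can have $n_{xy}=nP_{\bar{X}\bar{Y}}(x,y)=1$ no matter how large $n$ is), but this remark is also unnecessary---the lemma does not require support preservation, and the largest-remainder rounding already yields \eqref{eq:truncated-condition} without any lower bound on $n_{xy}$.
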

\begin{proof}
By truncating the given test channel $P_{W|\bar{X}\bar{Y}}$ into conditional type, we can obtain 
conditional type $P_{\bar{W}|\bar{X}\bar{Y}}$ satisfying \eqref{eq:truncated-condition}. 
Let $Z^{m_n} = \{ Z_1,\ldots,Z_{m_n}\}$ be i.i.d. and uniform over ${\cal T}_{\bar{W}}^n$.
We will show
\begin{align}
\san{E}\bigg[ \sum_{(\bm{x},\bm{y}) \in {\cal T}_{\bar{X}\bar{Y}}^n} 
 \bol{1}[ (Z_i,\bm{x},\bm{y}) \notin {\cal T}_{\bar{W}\bar{X}\bar{Y}}^n~\forall 1 \le i \le m_n ] \bigg] < 1,
 \label{eq:goal-covering}
\end{align}
which implies that there exists ${\cal C}_n$ with $|{\cal C}_n|\le m_n$ satisfying the desired property.
The lefthand side can be manipulated as
\begin{align}
& \sum_{(\bm{x},\bm{y}) \in {\cal T}_{\bar{X}\bar{Y}}^n} 
 \san{E}\bigg[ \bol{1}[ (Z_i,\bm{x},\bm{y}) \notin {\cal T}_{\bar{W}\bar{X}\bar{Y}}^n~\forall 1 \le i \le m_n ] \bigg]  \\
&=  \sum_{(\bm{x},\bm{y}) \in {\cal T}_{\bar{X}\bar{Y}}^n}
 \bigg( 1 - \frac{|{\cal T}_{\bar{W}|\bar{X}\bar{Y}}^n(\bm{x},\bm{y})|}{|{\cal T}_{\bar{W}}^n|} \bigg)^{m_n} \\
&\le \sum_{(\bm{x},\bm{y}) \in {\cal T}_{\bar{X}\bar{Y}}^n} 
 \exp\bigg\{ - \frac{|{\cal T}_{\bar{W}|\bar{X}\bar{Y}}^n(\bm{x},\bm{y})|}{|{\cal T}_{\bar{W}}^n|} m_n \bigg\},
\end{align}
where the last inequality follows from $(1-t)^m \le \exp\{- t m\}$ for $0 < t < 1$.
Furthermore, it can be upper bounded as (cf.~\cite[Lemma 2.5]{csiszar-korner:11})
\begin{align}
& \sum_{(\bm{x},\bm{y}) \in {\cal T}_{\bar{X}\bar{Y}}^n} 
 \exp\bigg\{ - \frac{|{\cal T}_{\bar{W}|\bar{X}\bar{Y}}^n(\bm{x},\bm{y})|}{|{\cal T}_{\bar{W}}^n|} m_n \bigg\} \\
&\le \sum_{(\bm{x},\bm{y}) \in {\cal T}_{\bar{X}\bar{Y}}^n} 
 \exp\big\{ - (n+1)^{-|{\cal X}||{\cal Y}||{\cal W}| } 2^{- n I(\bar{W} \wedge \bar{X},\bar{Y})} m_n \big\} \\
&\le \exp\big\{ n H(\bar{X},\bar{Y}) \big\}
 \exp\big\{ - (n+1)^{-|{\cal X}||{\cal Y}||{\cal W}| } 2^{- n I(\bar{W} \wedge \bar{X},\bar{Y})} m_n \big\} \\ 
&\le \exp\big\{ n \log |{\cal X}||{\cal Y}| \big\}
 \exp\big\{ - (n+1)^{-|{\cal X}||{\cal Y}||{\cal W}| } 2^{- n I(\bar{W} \wedge \bar{X},\bar{Y})} m_n \big\}.
\end{align}
Thus, by taking $m_n$ such that
\begin{align}
& \exp\big\{ n I(\bar{W} \wedge \bar{X},\bar{Y}) + (|{\cal X}||{\cal Y}||{\cal W}| +2) \log (n+1) \big\} \\
&\le m_n 
\le \exp\big\{ n I(\bar{W} \wedge \bar{X},\bar{Y}) + (|{\cal X}||{\cal Y}||{\cal W}| +4) \log (n+1) \big\},
\end{align}
\eqref{eq:goal-covering} holds for sufficiently large $n$. 
\end{proof}

\begin{corollary} \label{corollary:code-construction}
Suppose that $n \ge n_0(|{\cal X}|,|{\cal Y}|,|{\cal W}|)$.
Given type $P_{\bar{X}\bar{Y}} \in {\cal P}_n({\cal X}\times {\cal Y})$, there exists a conditional type
$P_{\bar{W}|\bar{X}\bar{Y}}$ satisfying
\begin{align}
n H(\bar{X}|\bar{W}) &\le n r_1 + 2|{\cal X}||{\cal Y}||{\cal W}| \log n, \\
n H(\bar{Y}|\bar{W}) &\le n r_2 + 2|{\cal X}||{\cal Y}||{\cal W}| \log n,
\end{align}
and a subset ${\cal C}_n \subset {\cal T}_{\bar{W}}^n$ such that 
\begin{align} \label{eq:corollary-covering-size}
\log |{\cal C}_n| \le n R(r_1,r_2|P_{\bar{X}\bar{Y}}) + (3|{\cal X}||{\cal Y}||{\cal W}| + 4) \log (n+1),
\end{align}
and such that, for any $(\bm{x},\bm{y}) \in {\cal T}_{\bar{X}\bar{Y}}^n$, there exists $\bm{w} \in {\cal C}_n$ 
satisfying $(\bm{w},\bm{x},\bm{y}) \in {\cal T}_{\bar{W}\bar{X}\bar{Y}}^n$.
\end{corollary}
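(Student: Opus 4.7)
The plan is to derive this corollary directly from Lemma \ref{lemma:code-construction} by instantiating the free test channel as an optimizer of $R(r_1,r_2|P_{\bar{X}\bar{Y}})$ and then translating the guarantees enjoyed by that test channel over to the resulting conditional type via a continuity argument.

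First I would pick $P_{W|\bar{X}\bar{Y}}^\star$ achieving the minimum in \eqref{eq:optimal-r0-function} for source $P_{\bar{X}\bar{Y}}$, so that $I(W \wedge \bar{X},\bar{Y}) = R(r_1,r_2|P_{\bar{X}\bar{Y}})$, $H(\bar{X}|W) \le r_1$, and $H(\bar{Y}|W) \le r_2$. Invoking Lemma \ref{lemma:code-construction} with this test channel produces a conditional type $P_{\bar{W}|\bar{X}\bar{Y}}$ satisfying \eqref{eq:truncated-condition} together with a covering subset ${\cal C}_n \subset {\cal T}_{\bar{W}}^n$ of size at most $\exp\{ n I(\bar{W}\wedge\bar{X},\bar{Y}) + (|{\cal X}||{\cal Y}||{\cal W}|+4)\log(n+1)\}$ with the required covering property.

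Next I would convert \eqref{eq:truncated-condition} into an $L_1$ bound on the joint distributions. Summing over $(x,y,w)$ restricted to the support of $P_{\bar{X}\bar{Y}} \otimes P_{W|\bar{X}\bar{Y}}^\star$, the weight $P_{\bar{X}\bar{Y}}(x,y)$ cancels the denominator in \eqref{eq:truncated-condition}, giving $\|P_{\bar{X}\bar{Y}} P_{\bar{W}|\bar{X}\bar{Y}} - P_{\bar{X}\bar{Y}} P_{W|\bar{X}\bar{Y}}^\star\|_1 \le |{\cal X}||{\cal Y}||{\cal W}|/n$. Marginalization yields the same order of $L_1$ closeness between the pairs $(P_{\bar{X}\bar{W}},P_{\bar{X}W})$, $(P_{\bar{Y}\bar{W}},P_{\bar{Y}W})$, and $(P_{\bar{W}},P_W)$. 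I would then apply the standard Fannes-type continuity bound $|H(P)-H(Q)| \le \|P-Q\|_1 \log \frac{|{\cal Z}|}{\|P-Q\|_1}$ to each of these pairs. Writing $H(\bar{X}|\bar{W}) = H(\bar{X},\bar{W}) - H(\bar{W})$ and similarly for $H(\bar{X}|W)$, the triangle inequality contributes a factor of $2$, leading to
\begin{align*}
n|H(\bar{X}|\bar{W}) - H(\bar{X}|W)| \le 2|{\cal X}||{\cal Y}||{\cal W}| \log n + O(1),
\end{align*}
and the analogous inequality for $\bar{Y}$. Since $H(\bar{X}|W)\le r_1$ and $H(\bar{Y}|W) \le r_2$, the first two inequalities of the corollary follow (absorbing the $O(1)$ term in the $\log n$ factor for $n$ large enough). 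The same argument applied to $I(\bar{W}\wedge \bar{X},\bar{Y}) = H(\bar{W}) - H(\bar{W}|\bar{X},\bar{Y})$ gives $n I(\bar{W}\wedge \bar{X},\bar{Y}) \le n R(r_1,r_2|P_{\bar{X}\bar{Y}}) + 2|{\cal X}||{\cal Y}||{\cal W}| \log n + O(1)$, and plugging this into the cardinality bound from Lemma \ref{lemma:code-construction} and collecting the $\log(n+1)$ terms yields \eqref{eq:corollary-covering-size}.

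The main obstacle is purely bookkeeping: one must verify that the $L_1$ closeness derived from \eqref{eq:truncated-condition} survives restriction to the supports where the Fannes-type inequality is applicable, and that the continuity prefactors sum exactly to the constants $2|{\cal X}||{\cal Y}||{\cal W}|$ and $3|{\cal X}||{\cal Y}||{\cal W}|+4$ advertised in the statement; no deeper ideas are needed beyond Lemma \ref{lemma:code-construction} and continuity of entropy.
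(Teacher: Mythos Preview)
Your proposal is correct and follows essentially the same route as the paper: choose an optimal test channel for $R(r_1,r_2|P_{\bar X\bar Y})$, invoke Lemma~\ref{lemma:code-construction}, turn \eqref{eq:truncated-condition} into the $L_1$ bound $\|P_{\bar W\bar X\bar Y}-P_{W\bar X\bar Y}\|_1\le|{\cal X}||{\cal Y}||{\cal W}|/n$, and then use the Fannes/Csisz\'ar--K\"orner continuity estimate on the decompositions $H(\bar X|\bar W)=H(\bar X,\bar W)-H(\bar W)$ and $I(\bar W\wedge\bar X,\bar Y)=H(\bar W)-H(\bar W|\bar X,\bar Y)$. The only cosmetic difference is that the paper tracks the continuity constants exactly (so no $O(1)$ slack appears), whereas you defer that to an ``$n$ large enough'' absorption; either way the advertised constants come out.
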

\begin{proof}
For the given type $P_{\bar{X}\bar{Y}}$, we pick an optimal test channel $P_{W|\bar{X}\bar{Y}}$
that achieve $R(r_1,r_2|P_{\bar{X}\bar{Y}})$. Then, Lemma \ref{lemma:code-construction}
implies that there exists conditional type $P_{\bar{W}|\bar{X}\bar{Y}}$ satisfying \eqref{eq:truncated-condition}
and a subset ${\cal C}_n$ satisfying the desired properties. From \eqref{eq:truncated-condition}, we have
\begin{align}
\big\| P_{\bar{W}\bar{X}\bar{Y}} - P_{W \bar{X}\bar{Y}} \big\|_1 \le \frac{|{\cal X}||{\cal Y}||{\cal W}|}{n},
\end{align}
where $\| \cdot \|_1$ is the variational distance.
Thus, by the continuity of entropy functions (cf.~\cite[Lemma 2.7]{csiszar-korner:11}), we have
\begin{align}
| H(\bar{X}|\bar{W}) - H(\bar{X}|W) |
&\le | H(\bar{W},\bar{X}) - H(W,\bar{X}) | + |H(W) - H(\bar{W}) | \\
&\le \frac{|{\cal X}||{\cal Y}||{\cal W}|}{n} \log \frac{|{\cal X}||{\cal W}|}{\frac{|{\cal X}||{\cal Y}||{\cal W}|}{n}}
 + \frac{|{\cal X}||{\cal Y}||{\cal W}|}{n} \log \frac{|{\cal W}|}{\frac{|{\cal X}||{\cal Y}||{\cal W}|}{n}} \\
&\le \frac{2 |{\cal X}||{\cal Y}||{\cal W}|}{n} \log n,
\end{align}
Similarly, we have
\begin{align}
| H(\bar{Y}|\bar{W}) - H(\bar{Y}|W) | \le \frac{2 |{\cal X}||{\cal Y}||{\cal W}|}{n} \log n
\end{align}
and 
\begin{align}
| I(\bar{W} \wedge \bar{X},\bar{Y}) - I(W \wedge \bar{X},\bar{Y})| 
&\le | H(\bar{W}) - H(W) | + | H(\bar{W},\bar{X},\bar{Y}) - H(W,\bar{X},\bar{Y})| \\
& \le \frac{2 |{\cal X}||{\cal Y}||{\cal W}|}{n} \log n.
\end{align}
\end{proof}

From Corollary \ref{corollary:code-construction}, we can derive the following.
\begin{lemma}
There exists a code $\Phi_n$ such that 
\begin{align} \label{eq:achievability-bound}
\Pe(\Phi_n|P_{XY}^n) \le \Pr\bigg( r_{0,n} < R(r_{1,n}, r_{2,n}| \san{t}_{X^n Y^n} )\bigg),
\end{align}
where $\san{t}_{X^n Y^n}$ is the joint type of $(X^n,Y^n)$, and 
\begin{align}
r_{0,n} &:= \frac{1}{n} \log |{\cal M}_0^{(n)}| - \frac{(4|{\cal X}||{\cal Y}||{\cal W}| + 4) \log (n+1)}{n}, \label{eq:achievability-tentative-rate-0} \\
r_{1,n} &:= \frac{1}{n} \log |{\cal M}_1^{(n)}| - \frac{2|{\cal X}||{\cal Y}||{\cal W}| \log n}{n}, \label{eq:achievability-tentative-rate-1} \\   
r_{2,n} &:= \frac{1}{n} \log |{\cal M}_2^{(n)}| - \frac{2|{\cal X}||{\cal Y}||{\cal W}| \log n}{n}. \label{eq:achievability-tentative-rate-2}
\end{align}
\end{lemma}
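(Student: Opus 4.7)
The plan is to construct $\Phi_n$ type-by-type, using Corollary~\ref{corollary:code-construction} as the covering primitive. The common-channel encoder $\varphi_0$ first transmits the joint type $\san{t}_{X^n Y^n}$ of the source pair, which costs at most $|{\cal X}||{\cal Y}|\log(n+1)$ bits; if the transmitted type $P_{\bar{X}\bar{Y}}$ satisfies $R(r_{1,n},r_{2,n}|P_{\bar{X}\bar{Y}})\le r_{0,n}$, then $\varphi_0$ appends the index of a covering word $\bm{w}\in{\cal C}_n$ obtained by applying Corollary~\ref{corollary:code-construction} to that type with target rates $(r_{1,n},r_{2,n})$. The private encoders $\varphi_1,\varphi_2$ then describe $\bm{x}$ and $\bm{y}$ by their indices inside the conditional type classes ${\cal T}_{\bar{X}|\bar{W}}^n(\bm{w})$ and ${\cal T}_{\bar{Y}|\bar{W}}^n(\bm{w})$, respectively. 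Each decoder learns the joint type and $\bm{w}$ from the common channel and its own private index, so it recovers $\bm{x}$ or $\bm{y}$ exactly.

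Next I would verify the three cardinality constraints. For the common channel, Corollary~\ref{corollary:code-construction} supplies $\log|{\cal C}_n|\le n R(r_{1,n},r_{2,n}|P_{\bar{X}\bar{Y}}) + (3|{\cal X}||{\cal Y}||{\cal W}|+4)\log(n+1)$; adding the type header and invoking the good-type condition shows that the total does not exceed $n r_{0,n}+(4|{\cal X}||{\cal Y}||{\cal W}|+4)\log(n+1)=\log|{\cal M}_0^{(n)}|$ by \eqref{eq:achievability-tentative-rate-0}. The same Corollary also guarantees $H(\bar{X}|\bar{W})\le r_{1,n}+\frac{2|{\cal X}||{\cal Y}||{\cal W}|\log n}{n}$, so $|{\cal T}_{\bar{X}|\bar{W}}^n(\bm{w})|\le \exp\{n H(\bar{X}|\bar{W})\}\le|{\cal M}_1^{(n)}|$ by \eqref{eq:achievability-tentative-rate-1}, and analogously for $|{\cal M}_2^{(n)}|$.

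For joint types violating $R(r_{1,n},r_{2,n}|P_{\bar{X}\bar{Y}})\le r_{0,n}$ I would simply declare a block error (using any fallback encoding). Since the decoders invert the encoding exactly on all good types, the error event is precisely $\{r_{0,n}<R(r_{1,n},r_{2,n}|\san{t}_{X^n Y^n})\}$, yielding \eqref{eq:achievability-bound}. The only real work is the bookkeeping: verifying that the polynomial slacks subtracted in \eqref{eq:achievability-tentative-rate-0}--\eqref{eq:achievability-tentative-rate-2} simultaneously absorb (i) the type header on the common channel, (ii) the $(3|{\cal X}||{\cal Y}||{\cal W}|+4)\log(n+1)$ overhead in the covering bound, and (iii) the $\frac{2|{\cal X}||{\cal Y}||{\cal W}|\log n}{n}$ continuity correction on the conditional entropies. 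The covering primitive of Corollary~\ref{corollary:code-construction} supplies every other ingredient, so once the constants line up the conclusion is immediate.
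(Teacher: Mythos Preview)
Your proposal is correct and follows essentially the same approach as the paper: both construct $\Phi_n$ type-by-type by sending the joint type over the common channel, invoking Corollary~\ref{corollary:code-construction} at rates $(r_{1,n},r_{2,n})$ to obtain a covering code and a conditional type $P_{\bar{W}|\bar{X}\bar{Y}}$, and then transmitting the indices of $X^n$ and $Y^n$ within the conditional type classes ${\cal T}_{\bar{X}|\bar{W}}^n(\bm{w})$ and ${\cal T}_{\bar{Y}|\bar{W}}^n(\bm{w})$ over the private channels. The only cosmetic difference is that the paper phrases the error event as ``$\log|{\cal C}_n|$ exceeds the remaining common-channel budget'' and then bounds that event via \eqref{eq:corollary-covering-size}, whereas you declare the error directly on the condition $R(r_{1,n},r_{2,n}|P_{\bar{X}\bar{Y}})>r_{0,n}$; the resulting error-probability bound is identical, and your constant bookkeeping (using $|{\cal X}||{\cal Y}|\le |{\cal X}||{\cal Y}||{\cal W}|$ to absorb the type header into the slack of \eqref{eq:achievability-tentative-rate-0}) is fine.
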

\begin{proof}
We consider the following coding scheme.
Upon observing $(X^n,Y^n)$, the encoder first compute its joint type $\san{t}_{X^n Y^n}$,
and sends it to the decoders via the common channel by using
$|{\cal X}||{\cal Y}| \log (n+1)$ bits. Then, for joint type $P_{\bar{X}\bar{Y}} = \san{t}_{X^n Y^n}$, the encoder finds\footnote{The encoder and the decoders
agree on the choice of the test channel $P_{\bar{W}|\bar{X}\bar{Y}}$ and the subset ${\cal C}_n$ for each joint type.}
the test channel $P_{\bar{W}|\bar{X}\bar{Y}}$ and ${\cal C}_n$ that are specified by
Corollary \ref{corollary:code-construction}, where we set $(r_1,r_2)$ in the corollary to be
$(r_{1,n}, r_{2,n})$ given by \eqref{eq:achievability-tentative-rate-1} and \eqref{eq:achievability-tentative-rate-2}, respectively.
If $\log |{\cal C}_n|$ exceeds 
\begin{align} \label{eq:error-event}
\log |{\cal M}_0^{(n)}| - |{\cal X}||{\cal Y}| \log (n+1),
\end{align} 
then the system aborts and declares an error. Otherwise, the encoder send $\bm{w} \in {\cal C}_n$
satisfying $(\bm{w}, X^n,Y^n) \in {\cal T}_{\bar{W}\bar{X}\bar{Y}}^n$ to the decoders via the common channel.
Since $X^n \in {\cal T}_{\bar{X}|\bar{W}}^n(\bm{w})$, the encoder sends index of $X^n$ in ${\cal T}_{\bar{X}|\bar{W}}^n(\bm{w})$
to the first decoder via the first private channel by using 
\begin{align}
\log |{\cal T}_{\bar{X}|\bar{W}}^n(\bm{w})|
&\le n H(\bar{X}|\bar{W}) \\
&\le \log |{\cal M}_1^{(n)}|
\end{align}
bits. Similarly, since $Y^n \in {\cal T}_{\bar{Y}|\bar{W}}^n(\bm{w})$, the encoder sends the index of $Y^n$
in ${\cal T}_{\bar{Y}|\bar{W}}^n(\bm{w})$ to the second decoder via the second private channel by using
\begin{align}
\log |{\cal T}_{\bar{Y}|\bar{W}}^n(\bm{w})|
&\le n H(\bar{Y}|\bar{W}) \\
&\le \log |{\cal M}_2^{(n)}|
\end{align}
bits. 

In the above coding scheme, the error occurs only when $\log |{\cal C}_n|$ exceeds \eqref{eq:error-event}. 
Thus, noting \eqref{eq:achievability-tentative-rate-0} and \eqref{eq:corollary-covering-size},
the error probability is upper bounded by
\begin{align} 
\Pr\bigg( r_{0,n} < R(r_{1,n}, r_{2,n}| \san{t}_{X^n Y^n} )\bigg),
\end{align}
which completes the proof.
\end{proof}

Now, we evaluate the right hand side of \eqref{eq:achievability-bound}.
Let (cf.~Section \ref{subsection:notation} for the notations $\theta_i$ and $m$)
\begin{align}
{\cal K}_n := \bigg\{ P_{\bar{X}\bar{Y}} \in {\cal P}_n({\cal X}\times {\cal Y}) : |\theta_i(P_{\bar{X}\bar{Y}}) - \theta_i(P_{XY})| 
  \le \sqrt{\frac{\log n}{n}}~~\forall 1 \le i \le m-1 \bigg\}.
\end{align}
The following is an immediate consequence of Hoeffding's inequality.
\begin{proposition}
We have
\begin{align}
\Pr\bigg( \san{t}_{X^n Y^n} \notin {\cal K}_n \bigg) \le \frac{2(m-1)}{n^2}.
\end{align}
\end{proposition}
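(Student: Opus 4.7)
The plan is to recognize that, for each coordinate $i \in \{1,\ldots,m-1\}$, the quantity $\theta_i(\san{t}_{X^n Y^n})$ is nothing but the empirical mean of $n$ i.i.d.\ Bernoulli random variables with mean $\theta_i(P_{XY})$, and then to apply Hoeffding's inequality coordinate-wise followed by a union bound. Because $P_{XY}(i)=0$ for every symbol outside $\mathtt{supp}(P_{XY})$, the type $\san{t}_{X^n Y^n}$ almost surely assigns zero mass to symbols outside $\{1,\ldots,m\}$, so the parameterization $\theta(\cdot)$ introduced in Section~\ref{subsection:notation} is well defined on $\san{t}_{X^n Y^n}$ with probability one.

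Concretely, I would set $Z_k^{(i)} := \bol{1}[(X_k,Y_k) = i]$ for $i \in \{1,\ldots,m-1\}$ and $k \in \{1,\ldots,n\}$. These are i.i.d.\ $\mathrm{Bernoulli}(\theta_i(P_{XY}))$ random variables, and by definition of the parameterization,
\[
\theta_i(\san{t}_{X^n Y^n}) = \frac{1}{n} \sum_{k=1}^n Z_k^{(i)}.
\]
Hoeffding's inequality for a $[0,1]$-valued empirical mean then yields, for every $t > 0$,
\[
\Pr\bigg( \big| \theta_i(\san{t}_{X^n Y^n}) - \theta_i(P_{XY}) \big| > t \bigg) \le 2 \exp(-2 n t^2).
\]
Substituting $t = \sqrt{(\log n)/n}$ collapses the right-hand side to $2/n^2$ for each fixed $i$.

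Finally, a union bound over $i = 1,\ldots,m-1$ produces the claimed bound of $2(m-1)/n^2$. There is no substantive obstacle: the proposition is, as the text asserts, an immediate consequence of Hoeffding's inequality, and the only bookkeeping is to verify that the sharp constant $2$ in the Hoeffding exponent exactly matches the threshold $t^2 = (\log n)/n$ to the $1/n^2$ tail, and that the definition of ${\cal K}_n$ only constrains the first $m-1$ coordinates (the coordinate $\theta_m$ is determined by the others, so no separate control is required).
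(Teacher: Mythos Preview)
Your proposal is correct and matches the paper's approach: the paper states only that the proposition is an immediate consequence of Hoeffding's inequality, and your coordinate-wise application of Hoeffding followed by a union bound over the $m-1$ coordinates is exactly the intended argument. The arithmetic $2\exp(-2n\cdot (\log n)/n) = 2/n^2$ is right, and your observation that $\san{t}_{X^nY^n}$ almost surely has the same support as $P_{XY}$ (so that $\theta(\cdot)$ is well defined) is a harmless but correct piece of bookkeeping.
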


To evaluate \eqref{eq:achievability-bound}, we proceed as follows. We set
\begin{align}
\frac{1}{n} \log |{\cal M}_i^{(n)}| = r_i^* + \frac{L_i}{\sqrt{n}}
\end{align}
for $i=1,2$ (we will specify $|{\cal M}_0^{(n)}|$ later). 
Since we assumed that the second order derivatives of $R(r_1,r_2|P_\theta)$ with respect to
$(r_1,r_2,\theta)$ are bounded around a neighbor of $(r_1^*,r_2^*,\theta(P_{XY}))$, and since
$r_{i,n} - r_i^* = \frac{L_i}{\sqrt{n}} + O\left(\frac{\log n}{n}\right)$,  when $\san{t}_{X^n Y^n} \in {\cal K}_n$, we 
can Taylor expand $R(r_{1,n},r_{2,n}|\san{t}_{X^n Y^n})$ as
\begin{align}
R(r_{1,n},r_{2,n}|\san{t}_{X^n Y^n}) 
&\le R(r_1^*, r_2^*|P_{XY}) - \lambda_1^\star \frac{L_1}{\sqrt{n}}  
 - \lambda_2^\star \frac{L_2}{\sqrt{n}} \\
&~~~ + \sum_{i=1}^{m-1} (\theta_i(\san{t}_{X^nY^n}) - \theta_i(P_{XY})) 
  \big( \jmath_{XY}(i) - \jmath_{XY}(m) \big) + c \frac{\log n}{n} \\
&= R(r_1^*, r_2^*|P_{XY}) - \lambda_1^\star \frac{L_1}{\sqrt{n}} 
 - \lambda_2^\star \frac{L_2}{\sqrt{n}} \\
&~~~ + \sum_{x,y} ( \san{t}_{X^n Y^n}(x,y) - P_{XY}(x,y)) \jmath_{XY}(x,y) + c \frac{\log n}{n} \\
&=  \sum_{x,y} \san{t}_{X^n Y^n}(x,y) \jmath_{XY}(x,y) - \lambda_1^\star \frac{L_1}{\sqrt{n}} 
 - \lambda_2^\star \frac{L_2}{\sqrt{n}} + c \frac{\log n}{n} \\
&= \frac{1}{n} \sum_{i=1}^n \jmath_{XY}(X_i,Y_i)  - \lambda_1^\star \frac{L_1}{\sqrt{n}} 
 - \lambda_2^\star \frac{L_2}{\sqrt{n}} + c \frac{\log n}{n}
\end{align}
for some constant $c>0$ provided that $n$ is sufficiently large, where the first inequality follows from
Lemma \ref{lemma:GW-derivative} and the second equality follows from 
$\san{E}[\jmath_{XY}(X,Y)] = R(r_1^*, r_2^*|P_{XY})$.
Thus, we have
\begin{align}
& \Pr\bigg( r_{0,n} < R(r_{1,n}, r_{2,n}| \san{t}_{X^n Y^n} )\bigg) \\
&\le \Pr\bigg( \san{t}_{X^n Y^n} \in {\cal K}_n,~r_{0,n} < R(r_{1,n}, r_{2,n}| \san{t}_{X^n Y^n}) \bigg)
 +  \Pr\bigg( \san{t}_{X^n Y^n} \notin {\cal K}_n \bigg) \\
&\le \Pr\bigg( r_{0,n} + \lambda_1^\star \frac{L_1}{\sqrt{n}} + \lambda_2^\star \frac{L_2}{\sqrt{n}} 
 < \frac{1}{n} \sum_{i=1}^n \jmath_{XY}(X_i,Y_i) + c \frac{\log n}{n} \bigg) + \frac{2(m-1)}{n^2}.
\end{align}
Thus, if we set 
\begin{align}
\frac{1}{n} \log |{\cal M}_0^{(n)}| = R(r_1^*,r_2^*|P_{XY}) +  \frac{L_0}{\sqrt{n}} + \frac{(4|{\cal X}||{\cal Y}||{\cal W}| + 4 + c) \log (n+1)}{n}, 
\end{align}
there exists a code $\Phi_n$ such that 
\begin{align}
\Pe(\Phi_n|P_{XY}^n)
&\le \Pr\bigg( R(r_1^*,r_2^*|P_{XY}) +  \frac{L_0}{\sqrt{n}} + \lambda_1^\star \frac{L_1}{\sqrt{n}} + \lambda_2^\star \frac{L_2}{\sqrt{n}} 
 < \frac{1}{n} \sum_{i=1}^n \jmath_{XY}(X_i,Y_i) \bigg) + \frac{2(m-1)}{n^2} \\
&= \Pr\bigg( L_0 + \lambda_1^\star L_1 + \lambda_2^\star L_2 
 < \frac{1}{\sqrt{n}} \bigg( \sum_{i=1}^n \jmath_{XY}(X_i,Y_i) - n R(r_1^*,r_2^*|P_{XY}) \bigg) \bigg)  + \frac{2(m-1)}{n^2}.
\end{align}
Thus, if we set $L_0,L_1,L_2$ so that 
\begin{align}
L_0 + \lambda_1^\star L_1 + \lambda_2^\star L_2 \ge \sqrt{V_{XY}} \san{Q}^{-1}(\varepsilon),
\end{align}
by applying the central limit theorem, we have
\begin{align}
\limsup_{n\to\infty} \Pe(\Phi_n|P_{XY}^n) \le \varepsilon.
\end{align}
Since this code also satisfies \eqref{eq:definition-second-achievability-0}-\eqref{eq:definition-second-achievability-2}, we have shown
$(\varepsilon,r_0^*,r_1^*,r_2^*)$-achievability of $(L_0,L_1,L_2)$. \qed

\subsection{Proof of Converse} \label{subsection:converse}

In this section, we prove the converse part of Theorem \ref{theorem:main}.
We first derive a kind of strong converse bound when a code $\Phi_n$ is applied to source $(X^n,Y^n) \sim P_{{\cal T}_{\bar{X}\bar{Y}}^n}$
for the uniform distribution $P_{{\cal T}_{\bar{X}\bar{Y}}^n}$ on the type class for a fixed type $P_{\bar{X}\bar{Y}}$. 

\begin{lemma} \label{lemma:fixed-type-converse}
Suppose that the correct probability satisfies
\begin{align} \label{eq:lower-bound-on-correct}
\Pc(\Phi_n | P_{{\cal T}_{\bar{X}\bar{Y}}^n}) \ge 2^{- n \alpha_n}
\end{align}
for some positive number $\alpha_n$. 
Let $\beta_n$ be another positive number.
Then there exists $P_{\bar{W}|\bar{X}\bar{Y}}$ with $|{\cal W}| \le |{\cal X}| |{\cal Y}| +2$ such that 
\begin{align}
\frac{1}{n} \log |{\cal M}_0^{(n)}| &\ge I(\bar{W} \wedge \bar{X},\bar{Y}) - \frac{|{\cal X}||{\cal Y}| \log (n+1)}{n} - (\alpha_n+\beta_n), \\
\frac{1}{n} \log |{\cal M}_1^{(n)}| &\ge H(\bar{X}|\bar{W}) - \frac{1}{n}  - 2^{-n \beta_n} \log |{\cal X}|, \\
\frac{1}{n} \log |{\cal M}_2^{(n)}| &\ge H(\bar{Y}|\bar{W}) - \frac{1}{n} - 2^{- n \beta_n} \log |{\cal Y}|,
\end{align}
where $(\bar{X},\bar{Y}) \sim P_{\bar{X}\bar{Y}}$.
\end{lemma}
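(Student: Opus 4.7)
The plan is to restrict the source to the correctly-decoded subset, carry out a single-letterization via time-sharing on this restricted distribution, and then invoke the support lemma to reduce the auxiliary alphabet.

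Let ${\cal A} := \{(\bm{x},\bm{y}) \in {\cal T}_{\bar{X}\bar{Y}}^n : \psi_1(\varphi_0(\bm{x},\bm{y}),\varphi_1(\bm{x},\bm{y})) = \bm{x},\ \psi_2(\varphi_0(\bm{x},\bm{y}),\varphi_2(\bm{x},\bm{y})) = \bm{y}\}$ be the correctly-decoded subset. The hypothesis \eqref{eq:lower-bound-on-correct} yields $|{\cal A}| \ge 2^{-n\alpha_n}|{\cal T}_{\bar{X}\bar{Y}}^n|$. Let $Q$ be the uniform distribution on ${\cal A}$; under $Q$ the sequences $X^n$ and $Y^n$ are deterministic functions of $(M_0,M_1)$ and $(M_0,M_2)$, respectively. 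Introduce a time-sharing index $T$ uniform on $\{1,\ldots,n\}$, independent of everything, and set $\bar{X} := X_T$, $\bar{Y} := Y_T$, and $\bar{W}_0 := (M_0, X^{T-1}, Y^{T-1}, T)$. Because every element of ${\cal A}$ retains the empirical joint type $P_{\bar{X}\bar{Y}}$, the marginal of $(\bar{X},\bar{Y})$ under $Q \times P_T$ equals $P_{\bar{X}\bar{Y}}$, matching the marginal required by the lemma.

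Under $Q$, three single-letter bounds emerge by standard manipulations. For the common rate, $\log|{\cal M}_0^{(n)}| \ge H_Q(M_0) \ge I_Q(M_0;X^n,Y^n) = \log|{\cal A}| - H_Q(X^n,Y^n|M_0)$; combining the chain-rule identity $H_Q(X^n,Y^n|M_0) = nH_Q(\bar{X},\bar{Y}|\bar{W}_0)$ with the type-class size estimate $\log|{\cal T}_{\bar{X}\bar{Y}}^n| \ge nH(\bar{X},\bar{Y}) - |{\cal X}||{\cal Y}|\log(n+1)$ gives $\frac{1}{n}\log|{\cal M}_0^{(n)}| \ge I_Q(\bar{W}_0;\bar{X},\bar{Y}) - \frac{|{\cal X}||{\cal Y}|\log(n+1)}{n} - \alpha_n$. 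For decoder~$1$, since $X^n$ is $Q$-almost surely a function of $(M_0,M_1)$, we have $\log|{\cal M}_1^{(n)}| \ge H_Q(M_1|M_0) \ge H_Q(X^n|M_0) \ge \sum_t H_Q(X_t|M_0,X^{t-1},Y^{t-1}) = nH_Q(\bar{X}|\bar{W}_0)$; symmetrically $\frac{1}{n}\log|{\cal M}_2^{(n)}| \ge H_Q(\bar{Y}|\bar{W}_0)$. Finally, apply the support lemma to replace $\bar{W}_0$ by an auxiliary $\bar{W}$ of support $\le |{\cal X}||{\cal Y}|+2$ while preserving the $(\bar{X},\bar{Y})$-marginal ($|{\cal X}||{\cal Y}|-1$ constraints) and the three entropy functionals $H(\bar{X},\bar{Y}|\bar{W})$, $H(\bar{X}|\bar{W})$, $H(\bar{Y}|\bar{W})$. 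These derived bounds are in fact stronger than those asserted in the lemma---the slacks $-\beta_n$ in $R_0$ and $-\frac{1}{n}-2^{-n\beta_n}\log|{\cal X}|$ in $R_1,R_2$ are nonnegative---so the lemma follows \emph{a fortiori}; the $\beta_n$-parameterization is presumably retained for flexibility when the lemma is invoked in the proof of the main theorem.

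The main subtlety is the marginal-preservation observation in the first step: although restricting from $P_{{\cal T}_{\bar{X}\bar{Y}}^n}$ to $Q$ alters the joint law of $(X^n,Y^n)$, every element of ${\cal A}$ still has the exact empirical joint type $P_{\bar{X}\bar{Y}}$, so the time-shared marginal of $(\bar{X},\bar{Y})$ is unchanged. Without this property, the test channel produced by the single-letterization would have a $(\bar{X},\bar{Y})$-marginal different from $P_{\bar{X}\bar{Y}}$, and the lemma's statement would not be recoverable in its stated form.
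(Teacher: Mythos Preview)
Your argument is correct and in fact yields slightly sharper inequalities than the lemma asserts. It is, however, a genuinely different route from the paper's. The paper follows the Gu--Effros perturbation approach: it builds a tilted distribution $Q_{{\cal T}_{\bar{X}\bar{Y}}^n}$ on the whole type class that boosts the mass of the correctly-decoded set ${\cal D}_{\bar{X}\bar{Y}}$ by a factor $2^{n(\alpha_n+\beta_n)}$, so that the error probability under $Q$ is at most $2^{-n\beta_n}$; it then single-letterizes under $Q$, using Fano's inequality to control $H(X^n|S_0,S_1)$ (this is where the $\frac{1}{n}+2^{-n\beta_n}\log|{\cal X}|$ terms arise), and finally bounds the residual $H(X_J,Y_J)-\frac{1}{n}H(X^n,Y^n)$ via $|\log Q - \log P_{{\cal T}_{\bar{X}\bar{Y}}^n}| \le n(\alpha_n+\beta_n)$ (this is where the $(\alpha_n+\beta_n)$ term arises). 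Your approach instead conditions hard on the correctly-decoded set, which makes Fano superfluous and collapses the residual term to $-\alpha_n$ only; the crucial enabling observation---which you identify explicitly---is that conditioning on a subset of a single type class leaves the time-shared single-letter marginal exactly equal to $P_{\bar{X}\bar{Y}}$. That observation is what lets you bypass the perturbation machinery here. The paper's route is more general in spirit (it transfers to settings where exact decoding on a subset does not give determinism, e.g., lossy problems), whereas your route exploits the lossless nature of the Gray--Wyner problem to give a cleaner proof with no $\beta_n$ slack.
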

\begin{proof}
We prove this lemma by using the perturbation approach used in \cite{GuEff:09, GuEff:11}.
Let 
\begin{align}
{\cal D}_{\bar{X}\bar{Y}} := \bigg\{ (\bm{x},\bm{y}) \in {\cal T}_{\bar{X}\bar{Y}}^n: \psi_1(\varphi_0(\bm{x},\bm{y}),\varphi_1(\bm{x},\bm{y})) = \bm{x},   
 \psi_2(\varphi_0(\bm{x},\bm{y}),\varphi_2(\bm{x},\bm{y})) = \bm{y}  \bigg\}.
\end{align}
be the set of correctly decodable sequences on ${\cal T}_{\bar{X}\bar{Y}}^n$.
Let $Q_{{\cal T}_{\bar{X}\bar{Y}}^n}$ be a distribution on ${\cal T}_{\bar{X}\bar{Y}}^n$ defined by
\begin{align}
Q_{{\cal T}_{\bar{X}\bar{Y}}^n}(\bm{x},\bm{y}) = \frac{2^{n(\alpha_n +\beta_n)} P_{{\cal T}_{\bar{X}\bar{Y}}^n}(\bm{x},\bm{y})}{2^{n(\alpha_n +\beta_n)} P_{{\cal T}_{\bar{X}\bar{Y}}^n}({\cal D}_{\bar{X}\bar{Y}}) + (1 - P_{{\cal T}_{\bar{X}\bar{Y}}^n}({\cal D}_{\bar{X}\bar{Y}}))}
\end{align}
for $(\bm{x},\bm{y}) \in {\cal D}_{\bar{X}\bar{Y}}$ and
\begin{align}
Q_{{\cal T}_{\bar{X}\bar{Y}}^n}(\bm{x},\bm{y}) = \frac{ P_{{\cal T}_{\bar{X}\bar{Y}}^n}(\bm{x},\bm{y})}{2^{n(\alpha_n +\beta_n)} P_{{\cal T}_{\bar{X}\bar{Y}}^n}({\cal D}_{\bar{X}\bar{Y}}) + (1 - P_{{\cal T}_{\bar{X}\bar{Y}}^n}({\cal D}_{\bar{X}\bar{Y}}))}
\end{align}
for $(\bm{x},\bm{y}) \notin {\cal D}_{\bar{X}\bar{Y}}$. Then, from \eqref{eq:lower-bound-on-correct}, we have
\begin{align}
Q_{{\cal T}_{\bar{X}\bar{Y}}^n}( {\cal D}_{\bar{X}\bar{Y}} ) \ge \frac{2^{n \beta_n}}{2^{n \beta_n} + 1}.
\end{align}
In other words, if we use the same code $\Phi_n$ to source $(X^n,Y^n) \sim Q_{{\cal T}_{\bar{X}\bar{Y}}^n}$, we have
\begin{align} \label{eq:error-probability-perturbed}
\Pe(\Phi_n | Q_{{\cal T}_{\bar{X}\bar{Y}}^n}) \le 2^{- n \beta_n}.
\end{align}
Furthermore, for every $(\bm{x},\bm{y}) \in {\cal T}_{\bar{X}\bar{Y}}^n$, we have 
\begin{align}
Q_{{\cal T}_{\bar{X}\bar{Y}}^n}(\bm{x},\bm{y})
 \le 2^{n(\alpha_n +\beta_n)} P_{{\cal T}_{\bar{X}\bar{Y}}^n}(\bm{x},\bm{y})
  \label{eq:upper-bound-perturbated}
\end{align}
and
\begin{align}
Q_{{\cal T}_{\bar{X}\bar{Y}}^n}(\bm{x},\bm{y})
 &\ge \frac{ P_{{\cal T}_{\bar{X}\bar{Y}}^n}(\bm{x},\bm{y})}{2^{n(\alpha_n +\beta_n)} P_{{\cal T}_{\bar{X}\bar{Y}}^n}({\cal D}_{\bar{X}\bar{Y}}) +  2^{n(\alpha_n +\beta_n)} (1 - P_{{\cal T}_{\bar{X}\bar{Y}}^n}({\cal D}_{\bar{X}\bar{Y}}))} \\
  &= 2^{- n(\alpha_n +\beta_n)} P_{{\cal T}_{\bar{X}\bar{Y}}^n}(\bm{x},\bm{y}).
  \label{eq:lower-bound-perturbated}
\end{align}

Now, by a slight modification of the standard argument\footnote{Note that all the 
information quantities are evaluated with respect to $(X^n,Y^n) \sim Q_{{\cal T}_{\bar{X}\bar{Y}}^n}$; for example, 
$(X_i,Y_i)$ and $(X^{i-1},Y^{i-1})$ may not be independent.}, we have
\begin{align}
\frac{1}{n} \log |{\cal M}_0^{(n)}|
&\ge \frac{1}{n} H(S_0) \\
&\ge \frac{1}{n} I(S_0 \wedge X^n,Y^n) \\
&= \frac{1}{n} \sum_{i=1}^n I(S_0 \wedge X_i, Y_i | X^{i-1}, Y^{i-1}) \\
&= \frac{1}{n} \sum_{i=1}^n \bigg[ I(S_0, X^{i-1},Y^{i-1} \wedge X_i,Y_i) - I(X^{i-1},Y^{i-1} \wedge X_i, Y_i) \bigg] \\
&= \frac{1}{n} \sum_{i=1}^n I(S_0, X^{i-1},Y^{i-1} \wedge X_i,Y_i) - \frac{1}{n}\bigg[ \sum_{i=1}^n H(X_i,Y_i) - H(X^n,Y^n) \bigg] \\
&= \frac{1}{n} \sum_{i=1}^n I(W_i \wedge X_i, Y_i) - \frac{1}{n}\bigg[ \sum_{i=1}^n H(X_i,Y_i) - H(X^n,Y^n) \bigg] \\
&= I(W_J \wedge X_J, Y_J | J) - \bigg[ H(X_J, Y_J|J) - \frac{1}{n} H(X^n, Y^n) \bigg] \\
&= I(W_J \wedge X_J, Y_J | J) + I(J \wedge X_J, Y_J) - \bigg[ H(X_J, Y_J) - \frac{1}{n} H(X^n, Y^n) \bigg] \\
&= I(J, W_J \wedge X_J, Y_J) - \bigg[ H(X_J, Y_J) - \frac{1}{n} H(X^n, Y^n) \bigg], 
 \label{eq:enc-0-lower-bound}
\end{align}
where $S_0 = \varphi_0(X^n,Y^n)$, $W_i = (S_0, X^{i-1},Y^{i-1})$, and $J$ is the uniform 
random variable on $\{1,\ldots,n\}$ that is independent of all the other random variables\footnote{Note that
$J$ and $(X_J,Y_J)$ may not be independent.}. We also have
\begin{align}
\frac{1}{n} \log |{\cal M}_1^{(n)}|
&\ge \frac{1}{n} H(S_1) \\
&\ge \frac{1}{n} H(S_1|S_0) \\
&\ge \frac{1}{n} I(S_1 \wedge X^n | S_0) \\
&= \frac{1}{n} H(X^n | S_0) - \frac{1}{n} H(X^n | S_0, S_1) \\
&\ge \frac{1}{n} H(X^n| S_0) - \bigg[ \frac{1}{n} + 2^{- n \beta_n} \log |{\cal X}| \bigg] \\
&= \frac{1}{n} \sum_{i=1}^n H(X_i | S_0, X^{i-1}) - \bigg[ \frac{1}{n} + 2^{- n \beta_n} \log |{\cal X}| \bigg] \\
&\ge \frac{1}{n} \sum_{i=1}^n H(X_i | S_0, X^{i-1}, Y^{i-1}) - \bigg[ \frac{1}{n} + 2^{- n \beta_n} \log |{\cal X}| \bigg] \\
&= H(X_J | J, W_J) - \bigg[ \frac{1}{n} + 2^{- n \beta_n} \log |{\cal X}| \bigg],
 \label{eq:enc-1-lower-bound}
\end{align}
where $S_1 = \varphi_1(X^n,Y^n)$, and the forth inequality follows from the Fano inequality and \eqref{eq:error-probability-perturbed}. Similarly, we have
\begin{align}
\frac{1}{n} \log |{\cal M}_2^{(n)}|
\ge H(Y_J | J, W_J) - \bigg[ \frac{1}{n} + 2^{- n \beta_n} \log |{\cal Y}| \bigg].
 \label{eq:enc-2-lower-bound}
\end{align}

By the support lemma (cf.~\cite{csiszar-korner:11}), there exists $P_{W|X_J Y_J}$ such that 
$|{\cal W}| \le |{\cal X}| |{\cal Y}| + 2$ and 
\begin{align}
I(J, W_J \wedge X_J, Y_J) &= I(W \wedge X_J, Y_J), \\
H(X_J | J, W_J) &= H(X_J|W), \\
H(Y_J | J, W_J) &= H(Y_J|W).
\end{align}

Now, we claim that the distribution $P_{X_J Y_J}$ coincides with the type $P_{\bar{X}\bar{Y}}$. 
In fact, for every fixed $(\bm{x},\bm{y}) \in {\cal T}_{\bar{X}\bar{Y}}^n$, we have
\begin{align}
\Pr\bigg( (X_J, Y_J) = (a,b) \bigg| (X^n,Y^n) = (\bm{x},\bm{y}) \bigg) = P_{\bar{X}\bar{Y}}(a,b)
\end{align}
for every $(a,b) \in {\cal X} \times {\cal Y}$.
Thus, we have
\begin{align}
P_{X_J Y_J}(a,b)
&= \sum_{(\bm{x},\bm{y}) \in {\cal T}_{\bar{X}\bar{Y}}^n} Q_{{\cal T}_{\bar{X}\bar{Y}}^n}(\bm{x},\bm{y}) \Pr\bigg( (X_J, Y_J) = (a,b) \bigg| (X^n,Y^n) = (\bm{x},\bm{y}) \bigg) \\
&= P_{\bar{X}\bar{Y}}(a,b).
\end{align}
Here, by letting $\bar{W}$ as the random variable induced by the channel $P_{W|X_J Y_J}$
from $(\bar{X},\bar{Y}) \sim P_{\bar{X}\bar{Y}}$, we have
\begin{align}
I(J, W_J \wedge X_J, Y_J) &= I(\bar{W} \wedge \bar{X}, \bar{Y}), \label{eq:cardinality-bounded-0} \\
H(X_J | J, W_J) &= H(\bar{X}|\bar{W}), \label{eq:cardinality-bounded-1} \\
H(Y_J | J, W_J) &= H(\bar{Y}|\bar{W}). \label{eq:cardinality-bounded-2}
\end{align} 

Finally, we evaluate the residual term in \eqref{eq:enc-0-lower-bound}.
Since $P_{X_J Y_J} = P_{\bar{X}\bar{Y}}$, we have
\begin{align} \label{eq:equivalence-of-entropy}
H(X_J,Y_J) = H(\bar{X},\bar{Y}).
\end{align}
Furthermore, it is well known that (cf.~\cite{csiszar-korner:11})
\begin{align} \label{eq:type-class-bound}
n H(\bar{X},\bar{Y}) - |{\cal X}| |{\cal Y}| \log (n+1) 
 \le \log |{\cal T}_{\bar{X}\bar{Y}}^n| 
 \le n H(\bar{X},\bar{Y}).
\end{align}
From \eqref{eq:upper-bound-perturbated} and \eqref{eq:lower-bound-perturbated}, we have
\begin{align}
\bigg| \log \frac{1}{Q_{{\cal T}_{\bar{X}\bar{Y}}^n}(\bm{x},\bm{y})} - \log|{\cal T}_{\bar{X}\bar{Y}}^n| \bigg| \le n (\alpha_n + \beta_n)
\end{align}
for every $(\bm{x},\bm{y}) \in {\cal T}_{\bar{X}\bar{Y}}^n$. Thus, we have
\begin{align} \label{eq:entropy-perturbated-bound}
\bigg| H(X^n,Y^n) - \log |{\cal T}_{\bar{X}\bar{Y}}^n| \bigg| \le n (\alpha_n + \beta_n).
\end{align}
By combining \eqref{eq:equivalence-of-entropy}, \eqref{eq:type-class-bound}, and \eqref{eq:entropy-perturbated-bound}, we have
\begin{align}
& \bigg| H(X_J,Y_J) - \frac{1}{n} H(X^n,Y^n) \bigg| \\
&\le \bigg| H(X_J,Y_J) - \frac{1}{n} \log|{\cal T}_{\bar{X}\bar{Y}}^n| \bigg|
 + \bigg| \frac{1}{n} \log|{\cal T}_{\bar{X}\bar{Y}}^n| - \frac{1}{n} H(X^n,Y^n) \bigg| \\
&\le \frac{|{\cal X}||{\cal Y}| \log (n+1)}{n} + (\alpha_n+\beta_n).
\label{eq:residual-bound}
\end{align}

Consequently, from \eqref{eq:enc-0-lower-bound}, \eqref{eq:enc-1-lower-bound},
\eqref{eq:enc-2-lower-bound}, \eqref{eq:cardinality-bounded-0}-\eqref{eq:cardinality-bounded-2}, 
and \eqref{eq:residual-bound}, we have the claim of the lemma.
\end{proof}

We now use Lemma \ref{lemma:fixed-type-converse} by setting
\begin{align} \label{eq:choice-of-alpha-beta}
\alpha_n = \beta_n = \frac{\log n}{n}.
\end{align}
\begin{lemma}
For any code $\Phi_n$, it holds that 
\begin{align} \label{eq:converse-error-lower-bound-with-type}
\Pe(\Phi_n|P_{XY}^n) \ge \Pr\bigg( r_{0,n}  < R(r_{1,n}, r_{2,n}| \san{t}_{X^n Y^n}) \bigg) - \frac{1}{n},
\end{align}
where $\san{t}_{X^n Y^n}$ is the joint type of $(X^n,Y^n)$, 
\begin{align}
r_{0,n} &;= \frac{1}{n} \log |{\cal M}_0^{(n)}| + \frac{|{\cal X}||{\cal Y}| \log (n+1)}{n} + (\alpha_n + \beta_n), \\
r_{1,n} &:=  \frac{1}{n} \log |{\cal M}_1^{(n)}| + \frac{1}{n} + 2^{-n \beta_n} \log |{\cal X}|, \\
r_{2,n} &:= \frac{1}{n} \log |{\cal M}_2^{(n)}| + \frac{1}{n} + 2^{-n \beta_n} \log |{\cal Y}|,
\end{align}
and $\alpha_n$ and $\beta_n$ are set as \eqref{eq:choice-of-alpha-beta}.
\end{lemma}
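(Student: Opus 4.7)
The plan is to reduce the bound to a type-by-type application of Lemma \ref{lemma:fixed-type-converse}, and then integrate over the joint type of $(X^n,Y^n)$. Since $P_{XY}^n$ restricted to any type class ${\cal T}_{\bar{X}\bar{Y}}^n$ is the uniform distribution $P_{{\cal T}_{\bar{X}\bar{Y}}^n}$, I can decompose
\begin{align}
\Pe(\Phi_n|P_{XY}^n) = \sum_{P_{\bar{X}\bar{Y}} \in {\cal P}_n({\cal X}\times{\cal Y})} \Pr(\san{t}_{X^n Y^n} = P_{\bar{X}\bar{Y}}) \, \Pe(\Phi_n | P_{{\cal T}_{\bar{X}\bar{Y}}^n}),
\end{align}
and so it suffices to lower bound each conditional error probability for those types where $r_{0,n} < R(r_{1,n},r_{2,n}|P_{\bar{X}\bar{Y}})$.

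The next step is to apply the contrapositive of Lemma \ref{lemma:fixed-type-converse} with the choice $\alpha_n = \beta_n = (\log n)/n$, so that $2^{-n\alpha_n} = 1/n$. If the conditional correct probability were $\Pc(\Phi_n|P_{{\cal T}_{\bar{X}\bar{Y}}^n}) \ge 1/n$, the lemma would produce a test channel $P_{\bar{W}|\bar{X}\bar{Y}}$ with $|{\cal W}| \le |{\cal X}||{\cal Y}|+2$ satisfying
\begin{align}
I(\bar{W} \wedge \bar{X},\bar{Y}) \le r_{0,n}, \quad H(\bar{X}|\bar{W}) \le r_{1,n}, \quad H(\bar{Y}|\bar{W}) \le r_{2,n},
\end{align}
where the identification of the right-hand sides with the three quantities $r_{0,n},r_{1,n},r_{2,n}$ defined in the statement is exactly the rearrangement of the three bounds in Lemma \ref{lemma:fixed-type-converse}. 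Such a test channel would place $(r_{0,n},r_{1,n},r_{2,n})$ in ${\cal R}^*_{\mathtt{GW}}(P_{\bar{X}\bar{Y}})$, forcing $r_{0,n} \ge R(r_{1,n},r_{2,n}|P_{\bar{X}\bar{Y}})$ by the definition \eqref{eq:optimal-r0-function}. Contrapositively, whenever $r_{0,n} < R(r_{1,n},r_{2,n}|P_{\bar{X}\bar{Y}})$, we must have $\Pe(\Phi_n|P_{{\cal T}_{\bar{X}\bar{Y}}^n}) > 1 - 1/n$.

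Plugging this into the type decomposition gives
\begin{align}
\Pe(\Phi_n|P_{XY}^n)
&\ge \bigg(1 - \frac{1}{n}\bigg) \sum_{P_{\bar{X}\bar{Y}}:\, r_{0,n} < R(r_{1,n},r_{2,n}|P_{\bar{X}\bar{Y}})} \Pr(\san{t}_{X^n Y^n} = P_{\bar{X}\bar{Y}}) \\
&= \bigg(1 - \frac{1}{n}\bigg) \Pr\bigg( r_{0,n} < R(r_{1,n},r_{2,n}|\san{t}_{X^n Y^n}) \bigg) \\
&\ge \Pr\bigg( r_{0,n} < R(r_{1,n},r_{2,n}|\san{t}_{X^n Y^n}) \bigg) - \frac{1}{n},
\end{align}
which is the claimed bound. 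Essentially all the real work has already been done in Lemma \ref{lemma:fixed-type-converse}; the only things to verify carefully are the algebraic match between the three slack terms in that lemma and the definitions of $r_{0,n},r_{1,n},r_{2,n}$ here, and that the cardinality bound $|{\cal W}|\le |{\cal X}||{\cal Y}|+2$ is the same one appearing in ${\cal R}^*_{\mathtt{GW}}$. The main subtlety, and where I would be most careful, is this clean contrapositive step: the lemma is stated as an existence result about a test channel, but the definition of $R$ is an infimum over such channels, so the two directions line up only because we have also carried the cardinality bound through. There is no further obstacle beyond this bookkeeping.
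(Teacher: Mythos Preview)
Your proposal is correct and follows essentially the same route as the paper: both decompose the error probability over joint types, invoke the contrapositive of Lemma \ref{lemma:fixed-type-converse} with $\alpha_n=\beta_n=(\log n)/n$ to conclude that $\Pe(\Phi_n|P_{{\cal T}_{\bar{X}\bar{Y}}^n})\ge 1-1/n$ whenever $r_{0,n}<R(r_{1,n},r_{2,n}|P_{\bar{X}\bar{Y}})$, and then sum. Your remark about carrying the cardinality bound through so that the test channel from the lemma witnesses membership in ${\cal R}^*_{\mathtt{GW}}(P_{\bar{X}\bar{Y}})$ is exactly the point the paper relies on as well.
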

\begin{proof}
Let $\bm{r}_n = (r_{0,n},r_{1,n},r_{2,n})$.
Lemma \ref{lemma:fixed-type-converse} implies that if
\begin{align}
\bm{r}_n \notin {\cal R}_{\mathtt{GW}}(P_{\bar{X}\bar{Y}}),
\end{align}
then the correct probability satisfies 
\begin{align}
\Pc(\Phi_n| P_{{\cal T}_{\bar{X}\bar{Y}}^n}) < 2^{- n \alpha_n}.
\end{align}
Thus, we have
\begin{align}
\Pe(\Phi_n|P_{XY}^n)
&= \sum_{P_{\bar{X}\bar{Y}} \in {\cal P}_n({\cal X} \times {\cal Y})} P_{XY}^n({\cal T}_{\bar{X}\bar{Y}}^n) \Pe(\Phi_n|P_{{\cal T}_{\bar{X}\bar{Y}}^n}) \\
&\ge \sum_{P_{\bar{X}\bar{Y}} \in {\cal P}_n({\cal X} \times {\cal Y}) \atop \bm{r}_n \notin {\cal R}_{\mathtt{GW}}(P_{\bar{X}\bar{Y}})}
 P_{XY}^n({\cal T}_{\bar{X}\bar{Y}}^n) (1 - 2^{- n \alpha_n}) \\
&\ge \sum_{P_{\bar{X}\bar{Y}} \in {\cal P}_n({\cal X} \times {\cal Y}) \atop \bm{r}_n \notin {\cal R}_{\mathtt{GW}}(P_{\bar{X}\bar{Y}})}
 P_{XY}^n({\cal T}_{\bar{X}\bar{Y}}^n) - \frac{1}{n},
\end{align}
where the last inequality follows from the choice of $\alpha_n$.
By denoting the type of $(X^n,Y^n)$ by $\san{t}_{X^nY^n}$, the first term of the above bound can be written as
\begin{align}
\sum_{P_{\bar{X}\bar{Y}} \in {\cal P}_n({\cal X} \times {\cal Y}) \atop \bm{r}_n \notin {\cal R}_{\mathtt{GW}}(P_{\bar{X}\bar{Y}})}
 P_{XY}^n({\cal T}_{\bar{X}\bar{Y}}^n) 
&= \Pr\bigg( \bm{r}_n \notin {\cal R}_{\mathtt{GW}}(\san{t}_{X^nY^n}) \bigg) \\
 & = \Pr\bigg( r_{0,n} < R(r_{1,n}, r_{2,n}| \san{t}_{X^n Y^n}) \bigg),
\end{align}
which completes the proof.
\end{proof}

Now, we  evaluate the first term of the right hand side of \eqref{eq:converse-error-lower-bound-with-type}.
Suppose that $|{\cal M}_i^{(n)}|$ for $i=0,1,2$ satisfy \eqref{eq:definition-second-achievability-0}-\eqref{eq:definition-second-achievability-2}
for some $(L_0,L_1,L_2)$ satisfying
\begin{align} \label{eq:converse-violation}
L_0 + \lambda_1^\star L_1 + \lambda_2^\star L_2 < \sqrt{V_{XY}} \san{Q}^{-1}(\varepsilon).
\end{align}
Then, we can write
\begin{align} 
\frac{1}{n} \log |{\cal M}_i^{(n)}| = r_i^* + \frac{L_i}{\sqrt{n}} + \delta_{i,n},~~i=0,1,2,
\end{align}
for some $\delta_{i,n} = o(1/\sqrt{n})$. 
Thus, in the same manner as the achievability part, we have\footnote{Note also that $r_0^* = R(r_1^*,r_2^*|P_{XY})$.}
\begin{align}
& \Pr\bigg( r_{0,n} < R(r_{1,n}, r_{2,n}| \san{t}_{X^n Y^n} )\bigg) \\
&\ge \Pr\bigg( \san{t}_{X^nY^n} \in {\cal K}_n,~ r_{0,n} < R(r_{1,n}, r_{2,n}| \san{t}_{X^n Y^n} )\bigg) \\
&\ge \Pr\bigg( \san{t}_{X^nY^n} \in {\cal K}_n,~R(r_1^*,r_2^*|P_{XY}) + \lambda_1^\star \frac{L_1}{\sqrt{n}} + \lambda_2^\star \frac{L_2}{\sqrt{n}} 
 < \frac{1}{n} \sum_{i=1}^n \jmath_{XY}(X_i,Y_i) - \delta_n \bigg) \\
&\ge \Pr\bigg( R(r_1^*,r_2^*|P_{XY}) + \lambda_1^\star \frac{L_1}{\sqrt{n}} + \lambda_2^\star \frac{L_2}{\sqrt{n}} 
 < \frac{1}{n} \sum_{i=1}^n \jmath_{XY}(X_i,Y_i) - \delta_n \bigg) - \Pr\bigg( \san{t}_{X^nY^n} \notin {\cal K}_n \bigg) \\
&\ge \Pr\bigg( R(r_1^*,r_2^*|P_{XY}) + \lambda_1^\star \frac{L_1}{\sqrt{n}} + \lambda_2^\star \frac{L_2}{\sqrt{n}} 
 < \frac{1}{n} \sum_{i=1}^n \jmath_{XY}(X_i,Y_i) - \delta_n \bigg) - \frac{2(m-1)}{n^2}
\end{align}
for some $\delta_n = o(1/\sqrt{n})$. Thus, by the central limit theorem, we have
\begin{align}
\liminf_{n\to\infty} \Pe(\Phi_n|P_{XY}^n) > \varepsilon,
\end{align}
which implies that any $(L_0,L_1,L_2)$ satisfying \eqref{eq:converse-violation} is not 
$(\varepsilon, r_0^*,r_1^*,r_2^*)$-achievable. \qed

\section{On The Pangloss Plane} \label{section:pangloss}

In general, it is extremely difficult to compute the first-order region ${\cal R}_{\mathtt{GW}}^*(P_{XY})$, and 
so do the second-order region ${\cal L}_{\mathtt{GW}}(\varepsilon;r_0^*,r_1^*,r_2^*)$. Nevertheless, to get some insight, let us consider the following tractable case.

The region ${\cal R}_{\mathtt{GW}}^*(P_{XY})$ is contained in the outer region 
characterized by three planes (cf.~Fig.~\ref{Fig:example}):
\begin{align}
r_0 + r_1 + r_2 &\ge H(X,Y), \\
r_0 + r_1 &\ge H(X), \\
r_0 + r_2 &\ge H(Y).
\end{align}
The first plane
is called the {\em Pangloss} plane in \cite{GraWyn:74}. Let 
\begin{align}
{\cal H}(P_{XY}) &:= \big\{ (r_0,r_1,r_2) \in {\cal R}_{\mathtt{GW}}(P_{XY}) : r_0 + r_1 + r_2 = H(X,Y) \big\} \\
 &= \big\{ (I(W \wedge X,Y), H(X|W), H(Y|W)): |{\cal W}| \le |{\cal X}||{\cal Y}| + 2,~ X \markov W \markov Y \big\}
\end{align}
be the set of all achievable rate triplets on the Pangloss plane, where $X \markov W \markov Y$ means $(X,W,Y)$ form Markov chain.
Although explicit characterization of ${\cal H}(P_{XY})$ is not clear in general, it is broader than the following triangular region
\begin{align}
\mathtt{conv}\big\{ (H(X,Y), 0, 0),~(H(Y), H(X|Y),0),~(H(X),0,H(Y|X)) \big\},
\end{align}
and the altitude of the lowermost points is $r_0 = C_{\mathtt{W}}(P_{XY})$, where 
\begin{align}
C_{\mathtt{W}}(P_{XY}) &:= \min\big\{ r_0: \exists r_1,r_2 \mbox{ s.t. } (r_0,r_1,r_2) \in {\cal H}(P_{XY}) \big\} \\
&= \min\big\{ I(W \wedge X,Y) : |{\cal W}| \le |{\cal X}||{\cal Y}|,~X \markov W \markov Y \big\}
\end{align}
is Wyner's common information \cite{wyner:75b} (cf.~Fig.~\ref{Fig:example}).

When $(r_0^*,r_1^*,r_2^*) \in {\cal H}(P_{XY})$, it holds that
\begin{align} \label{eq:first-order-for-pangloss}
R(r_1^*,r_2^*|P_{XY}) = H(X,Y) - r_1^* - r_2^*.
\end{align}
Thus, $\lambda_1^\star = \lambda_2^\star = 1$.
Also, since $(r_0^*,r_1^*,r_2^*)$ is achieved by an optimal test channel satisfying $X \markov W^\star \markov Y$, 
it holds that
\begin{align}
\jmath_{XY}(x,y) &= \log \frac{P_{W|XY}^\star(w|x,y)}{P_{W^\star}(w)} 
 + \lambda_1^\star\bigg( \log \frac{1}{P_{X|W^\star}(x|w)} - r_1^* \bigg)
 + \lambda_2^\star\bigg( \log \frac{1}{P_{Y|W^\star}(y|w)} - r_2^* \bigg) \\
&= \log \frac{1}{P_{XY}(x,y)} - r_1^* - r_2^*.
\end{align}
Thus, the second-order region is characterized as follows:\footnote{It is apparent from \eqref{eq:first-order-for-pangloss} 
that the second derivative of $R(r_1,r_2|P_\theta)$ around $(r_1^*,r_2^*,\theta(P_{XY}))$ is bounded. Furthermore, instead of checking differentiability of
test channels, we can directly differentiate $\jmath_{XY}(x,y)$ in this case, and thus the 
validity of Lemma \ref{lemma:GW-derivative} is also guaranteed.}
\begin{corollary}
When $(r_0^*,r_1^*,r_2^*)$ is an strict inner point of ${\cal H}(P_{XY})$,\footnote{Since we do not know an explicit form of
${\cal R}_{\mathtt{GW}}(P_{XY})$ outside ${\cal H}(P_{XY})$, it is not clear if the regularity condition is satisfied or not on
the boundary of ${\cal H}(P_{XY})$.} it holds that
\begin{align}
{\cal L}_{\mathtt{GW}}(\varepsilon; r_0^*,r_1^*,r_2^*) 
 = \big\{ (L_0,L_1,L_2) : L_0 + L_1 + L_2 \ge \sqrt{V_{XY}} \san{Q}^{-1}(\varepsilon) \big\},
\end{align}
where $V_{XY}$ is given by
\begin{align}
V_{XY} &= \san{V}\big[ \jmath_{XY}(X,Y) \big] \\
&= \san{V}\bigg[ \log \frac{1}{P_{XY}(X,Y)} \bigg].
\end{align}
\end{corollary}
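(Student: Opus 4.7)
The plan is to derive this corollary as a direct specialization of Theorem \ref{theorem:main} by computing the three ingredients $\lambda_1^\star$, $\lambda_2^\star$, and $\jmath_{XY}$ explicitly on the Pangloss plane, and then verifying that the regularity hypotheses of Theorem \ref{theorem:main} hold.

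First I would establish the two slope values. Since on the Pangloss plane we have the identity $R(r_1,r_2|P_{XY}) = H(X,Y) - r_1 - r_2$ on an open neighborhood of $(r_1^*,r_2^*)$ (by the assumption that $(r_0^*,r_1^*,r_2^*)$ is a strict inner point of $\mathcal{H}(P_{XY})$), differentiating \eqref{eq:slope-of-plane} immediately yields $\lambda_1^\star = \lambda_2^\star = 1$. In particular, this also shows that $R(r_1,r_2|P_\theta)$ is twice differentiable with bounded second derivatives in a neighborhood of $(r_1^*,r_2^*,\theta(P_{XY}))$, since on a neighborhood of $\theta(P_{XY})$ the constraint still places us on the Pangloss plane and $R(r_1,r_2|P_\theta) = H(P_\theta) - r_1 - r_2$ is smooth in $(r_1,r_2,\theta)$.

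Next I would compute the tilted information density from \eqref{eq:GW-explicit-form-of-tilted-information}. Pick any optimal test channel $P_{W|XY}^\star$ achieving $(r_0^*,r_1^*,r_2^*) \in \mathcal{H}(P_{XY})$; by the characterization of $\mathcal{H}(P_{XY})$ it satisfies $X \markov W^\star \markov Y$. The Markov property gives $P_{XY|W^\star}(x,y|w) = P_{X|W^\star}(x|w) P_{Y|W^\star}(y|w)$, so by Bayes' rule
\begin{align}
\frac{P_{W|XY}^\star(w|x,y)}{P_{W^\star}(w)}
 = \frac{P_{X|W^\star}(x|w) P_{Y|W^\star}(y|w)}{P_{XY}(x,y)}.
\end{align}
Plugging this together with $\lambda_1^\star = \lambda_2^\star = 1$ into \eqref{eq:GW-explicit-form-of-tilted-information}, the two $\log(1/P_{X|W^\star})$ and $\log(1/P_{Y|W^\star})$ terms cancel the corresponding contributions and one obtains $\jmath_{XY}(x,y) = \log\frac{1}{P_{XY}(x,y)} - r_1^* - r_2^*$ for every $w \in \mathtt{supp}(P_{W|XY}^\star(\cdot|x,y))$. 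Since constants do not affect the variance, this yields $V_{XY} = \san{V}[\log(1/P_{XY}(X,Y))]$ as claimed.

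Finally I would dispatch the regularity hypothesis for Lemma \ref{lemma:GW-derivative}. Rather than constructing a differentiable family of optimal test channels (which could be delicate if optimal test channels are not unique), I would invoke the clause in the paper's regularity assumption that any condition validating \eqref{eq:derivative-coincide-0} suffices. Here, $\jmath_{X_\theta Y_\theta}(x,y) = \log(1/P_\theta(x,y)) - r_1^* - r_2^*$ is manifestly smooth in $\theta$, so one can directly differentiate $\san{E}[\jmath_{X_\theta Y_\theta}(X_\xi,Y_\xi)] = D(P_\xi \| P_\theta) + H(P_\xi) - r_1^* - r_2^*$ at $\theta = \xi$ and obtain the required vanishing derivative, confirming the conclusion of Lemma \ref{lemma:GW-derivative}. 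With all hypotheses of Theorem \ref{theorem:main} verified and the explicit values $\lambda_1^\star = \lambda_2^\star = 1$ substituted, the stated characterization of $\mathcal{L}_{\mathtt{GW}}(\varepsilon;r_0^*,r_1^*,r_2^*)$ follows at once; no step in this argument is a genuine obstacle beyond the bookkeeping of the regularity verification.
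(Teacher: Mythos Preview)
Your proposal is correct and follows essentially the same route as the paper: the paper likewise reads off $\lambda_1^\star=\lambda_2^\star=1$ from \eqref{eq:first-order-for-pangloss}, uses the Markov chain $X\markov W^\star\markov Y$ to collapse \eqref{eq:GW-explicit-form-of-tilted-information} to $\jmath_{XY}(x,y)=\log(1/P_{XY}(x,y))-r_1^*-r_2^*$, and handles the regularity hypotheses exactly as you do (bounded second derivatives via the affine form $H(P_\theta)-r_1-r_2$, and direct differentiation of $\jmath_{X_\theta Y_\theta}$ in lieu of the test-channel differentiability assumption). Your explicit identification $\san{E}[\jmath_{X_\theta Y_\theta}(X_\xi,Y_\xi)]=D(P_\xi\|P_\theta)+H(P_\xi)-r_1^*-r_2^*$ is a nice extra detail that makes the vanishing derivative transparent.
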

In fact, the sum constraint on the second-order rates in the above corollary coincides with 
the cooperative outer bound, where the two decoders cooperate. 
Thus, on the Pangloss plane, there is no
sum-rate loss compared to cooperative decoding scheme up to the second-order,
which is quite remarkable.
However, it does not mean that the auxiliary random variable is not needed;
the auxiliary random variable is needed to construct a code that achieve the optimal
second-order region.

\begin{figure}[tb]
\centering{
\begin{minipage}{.5\textwidth}
\includegraphics[width=\textwidth]{./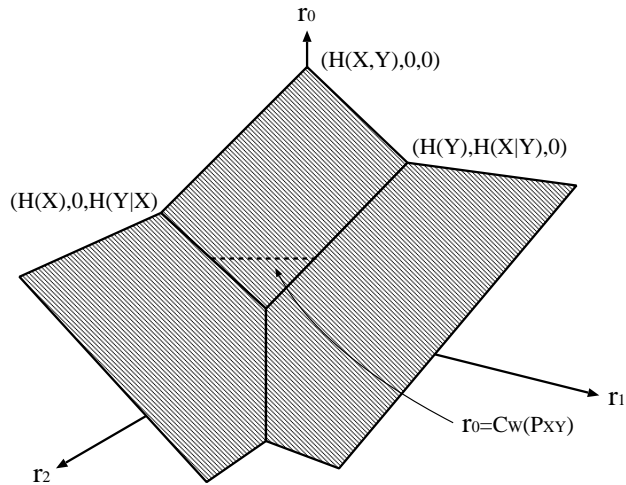}
\caption{A description of an outer region of ${\cal R}_{\mathtt{GW}}(P_{XY})$.}
\label{Fig:example}
\end{minipage}
}
\end{figure}

\section{Discussion} \label{section:discussion}

In this paper, we derived a characterization of the second-order region
of the Gray-Wyner network. Apart from the interest on this network itself, there is another motivation
to study this problem. As we mentioned earlier, the characterization of the first-order region of multi-terminal problems
typically involve auxiliary random variables; involvement of auxiliary random variables is 
one of reasons that the second-order analysis of multi-terminal problems is difficult.
Thus, the result of this paper is an important step toward extending the second-order analysis to multi-terminal problems. 
 
It seems that the next simple problems that involve auxiliary random variables are the coding problems with side-information (cf.~\cite{watanabe:13e}).
In contrast to the Gray-Wyner network, the coding problems with side-information involve Markov chain structures on auxiliary random variables that 
stem from the distributed coding nature of the problems. Thus, the techniques used in this paper are not enough to solve these problems.  
However, we believe that the result in this paper at least gives some hints to tackle those problems. 


\section*{Acknowledgement}

The author would like to thank Yasutada Oohama and Vincent Y. F. Tan 
for valuable comments. The author also would like to thank Victoria Kostina for
valuable comments; in particular, the reference \cite[Remark in p.~69]{csiszar:74b} in
Footnote \ref{footnote:uniqueness-tilted} was informed to the author from her.


\bibliographystyle{../../09-04-17-bibtex/IEEEtranS}
\bibliography{../../09-04-17-bibtex/reference.bib}

\end{document}